\newcommand{\parag}[1]{ {\bf \noindent #1}}
\newcommand{\nfrac}{\nicefrac}
\newcommand{\supp}{\mathrm{supp}}
\newcommand{\prob}[1]{\mathrm{Pr}\insquare{#1}}
\newcommand{\conv}{\mathrm{conv}}
\newcommand{\dist}{\mathrm{dist}}
\newcommand{\argmin}{\operatornamewithlimits{argmin}}
\newcommand{\fc}{\mathrm{fc}}
\newcommand{\cB}{\mathcal{B}}
\newcommand{\cF}{\mathcal{F}}
\newcommand{\capa}{\mathrm{Cap}}
\newcommand{\st}{\mathrm{s.t.}}
\newcommand{\wt}{\widetilde}
\newcommand{\diam}{\mathrm{diam}}
\newcommand{\lspan}{\mathrm{span}}
\newcommand{\interior}{\mathrm{int}}
\newcommand{\per}{\mathrm{per}}
\newcommand{\bl}{\mathrm{BL}}
\newcommand{\eps}{\varepsilon}
\newenvironment{proof}{\noindent{\bf Proof:}\hspace*{1em}}{\qed\bigskip}
\newcommand{\qed}{\hfill\ensuremath{\square}}
\newcommand\Z{\mathbb Z}
\newcommand\N{\mathbb N}
\newcommand\R{\mathbb R}
\newcommand\C{\mathbb C}
\newtheorem{theorem}{Theorem}[section]
\newtheorem{fact}{Fact}[section]
\newtheorem{definition}{Definition}[section]
\newtheorem{lemma}[theorem]{Lemma}
\newtheorem{remark}[theorem]{Remark}
\newtheorem{corollary}{Corollary}[section]
\def\abs#1{\left| #1 \right|}
\newcommand{\norm}[1]{\ensuremath{\left\lVert #1 \right\rVert}}
\newcommand{\diag}[1]{{\sf Diag}\left({#1}\right)}
\newcommand{\poly}{\mathrm{poly}}
\newcommand{\inparen}[1]{\left(#1\right)}             
\newcommand{\inbraces}[1]{\left\{#1\right\}}           
\newcommand{\insquare}[1]{\left[#1\right]}             
\newcommand{\inangle}[1]{\left\langle#1\right\rangle} 
\newenvironment{proofof}[1]{\begin{trivlist} \item {\bf Proof
#1:~~}}
  {\qed\end{trivlist}}
\title{\bf Maximum Entropy Distributions: \\ Bit Complexity and Stability}
\author[1]{ Damian Straszak}
\author[2]{Nisheeth K. Vishnoi}
\affil[1]{\small Aleph Zero Foundation}
\affil[2]{\small Yale University}
\begin{document}

\maketitle
 
\begin{abstract}
Maximum entropy distributions with discrete support in $m$ dimensions  arise in machine learning, statistics, information theory, and theoretical computer science.
While structural and computational properties of max-entropy distributions have been extensively studied,  basic questions such as: {\em Do max-entropy distributions over a large support  (e.g., $2^m$) with a specified marginal vector have succinct descriptions (polynomial-size in the input description)?} and: {\em Are entropy maximizing distributions ``stable'' under the perturbation of the marginal vector?}
have resisted a rigorous resolution. 
Here we show that these questions are related and resolve both of them.
Our main result shows  a ${\rm poly}(m, \log 1/\eps)$ bound on the bit complexity of $\eps$-optimal dual solutions to the maximum entropy convex program -- { for very general support sets and with no restriction on the marginal vector.}
Applications of this result include polynomial time algorithms to compute  max-entropy distributions over several new and old polytopes for any marginal vector in a unified manner, a  polynomial time algorithm to compute the Brascamp-Lieb constant in the rank-1 case.
The proof of this result allows us to show that changing the marginal vector by $\delta$ changes the max-entropy distribution in the total variation distance roughly by a factor of ${\rm poly}(m, \log 1/\delta)\sqrt{\delta}$ -- { even when the size of the support set is exponential.} 
Together, our results put max-entropy distributions on a mathematically sound footing --  these distributions are  robust and computationally feasible models for data.
  
\end{abstract}

\newpage

\tableofcontents

\newpage

\section{Introduction}
 
The central objects of study in this paper are maximum entropy (max-entropy) distributions over discrete domains.
For a  finite set of vectors $\cF \subseteq \Z^m$ and a point $\theta \in \R^m$, the max-entropy distribution corresponding to  $\theta$ is defined to be the one that maximizes the entropy  among all distributions  $\{q_\alpha\}_{\alpha \in \cF}$ whose expectation is $\theta$.
This is  formulated as the following convex program:\footnote{For simplicity, in the introduction we only consider the case with a  uniform prior.} 
\begin{equation}
\textstyle \begin{aligned}
\textstyle	\sup ~~ & \sum_{\alpha \in \cF}q_\alpha \log \frac{1}{q_\alpha},&\\
		\st ~~ & \sum_{\alpha \in \cF} q_\alpha \cdot \alpha = \theta, \;   \;
 \sum_{\alpha \in \cF} q_\alpha=1,& \; \;   q\geq 0. 
\end{aligned}
\label{eq:max_entropy_informal}	
\end{equation}
The model of a max-entropy distribution is natural to consider when one has  an unknown distribution $q$ on the set $\cF$ and the only information available is its marginal vector  $\theta.$ 
Then, according to the max-entropy principle~(\cite{Jaynes1,Jaynes2}), the best guess for $q$ is the max-entropy distribution with expectation $\theta$. 
The rationale behind this choice is that maximizing the entropy yields a distribution with the least amount of prior information built-in.

In many interesting cases, the support $\cF$ is a large set and may be specified only implicitly (e.g., as the set of all subsets of a universe or the set of all spanning trees or matchings in a graph).
Hence, the program~\eqref{eq:max_entropy_informal} may have a prohibitively large number of variables, $|\cF|$, that makes solving it	intractable.
Thus, one often considers the following Lagrange dual of~\eqref{eq:max_entropy_informal} that involves only $m$ variables.	
\begin{equation}\label{eq:dual_simplified}
\begin{aligned}{\textstyle \inf_{y\in \R^m}  \log \inparen{\sum_{\alpha \in \cF} e^{\inangle{\alpha - \theta,y}}}.}
\end{aligned}
\end{equation}

\noindent
Two questions arise: is  max-entropy a ``stable'' model? In other words, how much does the optimal distribution (in total variation distribution) change when one changes the marginal vector? And, given the marginal vector, can one compute the max-entropy distribution?
Both these questions are important from the point of view of applicability of the max-entropy model.
Stability is an important consideration in machine learning (\cite{bousquet2002stability,yu2013,HuangV19}): in particular, for the max-entropy problem, when the marginal vector comes from averaging ``observed'' samples, stability plays a role in deciding the number of samples to learn the distribution; see \cite{Dudik07}.
Computability of max-entropy distributions enables one to estimate the max-entropy distribution corresponding to the above marginal vector and has been studied extensively in theoretical computer science (\cite{SinghV14}) and machine learning (\cite{Dudik07}).  

\paragraph{Computability and bit complexity.}
Duality implies that the optimal solution $q^\star$ to~\eqref{eq:max_entropy_informal} is succinctly represented by the optimal dual solution~\eqref{eq:dual_simplified} as follows:
$q_\alpha^\star \propto e^{\inangle{\alpha, y^\star}}$  {for every } $\alpha \in \cF$.
While the above provides a way to represent the max-entropy distribution $q^\star$ using only a vector of $m$ real numbers, the issue becomes the bit complexity of the optimal solution $y^\star$. 
Indeed, any algorithm that tries to compute $y^\star$ requires the knowledge of an upper bound $\norm{y^\star}\leq R$.
Further, for the representation above to be efficient one requires $\norm{y^\star}$ to be bounded by a polynomial in $m$, as (even assuming $\norm{\alpha}=O(1)$ for all $\alpha\in \cF$) the value $e^{\inangle{\alpha,y^\star}}$ can be as large as $e^{\Omega(\norm{y^\star})}$ and thus requires $\Omega(\norm{y^\star})$ bits to store.

The first step to  understand the bit complexity of max-entropy distributions for general polytopes  was taken in~\cite{SinghV14} where it was proved that whenever the point $\theta$ is ``deep'' inside the marginal polytope $P:=\conv\{\alpha: \alpha \in \cF\}$, then a polynomial bound on $y^\star$ holds.
More precisely, whenever $\theta\in P$ is at a  distance $\eta>0$ away from the boundary of $P$ then $\norm{y^\star}\leq \frac{\log |\mathcal{F}|}{\eta}$.

However, in many  applications one has to deal with instances where the point $\theta$ lies very close or even on the boundary of $P$.
For instance, in machine learning and statistics~(\cite{PietraPL97,Nigam99,Soofi00,SW87,Dudik07}), the point $\theta$ is often obtained as the empirical average of observed samples. 
Here,  it might happen that the point $\theta$ lies in the interior with only exponentially small probability or may end up on the boundary of $P$ even if $P$ is rather ``round''; see Section~\ref{eq:sec_boundary}. 
In TCS, several problems  can also be seen to be  disguised forms of the max-entropy problem and polynomial time algorithms for them necessarily require  a polynomial bound on the size of the optimal or near-optimal dual solution.
Examples include the problems of computing Brascamp-Lieb constants (\cite{GGOW17}) in the rank-1 setting,
deterministically approximating mixed discriminants (\cite{GS02}),
 solving convex relaxations for counting and optimization problems  (\cite{NS16}).

In all of the above applications, the bound given by~\cite{SinghV14} is not  sufficient as it necessarily deteriorates to $+\infty$ when $\eta$ tends to $0$ and one needs the following stronger result:
{\em For a given $P$, if $y^\star_\eps$  is an $\eps$-approximate solution  to the dual problem, then $\|y_\eps^\star\| \leq R_\eps= \poly\inparen{m,\log \nfrac{1}{\eps}}$ -- where the bound holds  for {\bf every} $\theta \in P$.}
Our first result shows that the above is true as long as all the faces describing $P$ have integer coefficients that are polynomially bounded by $m$. 
This  includes all the aforementioned examples as special cases and also covers a large class of combinatorial polytopes (e.g., matroid polytopes, matroid intersection polytopes such as bipartite matching, general matching, submodular polytopes, Brascamp-Lieb polytopes).
This result is obtained by a  geometrical reasoning that involves convex analysis and polyhedral geometry: it  relies on the fact that the faces describing the convex hull of the discrete domain have polynomially small coefficients, and 
avoids combinatorial arguments based on the specific structure of the domain.
We also complement this result by providing a matching bit-complexity lower-bound for polytopes whose faces have large coefficients.
Surprisingly, we observe that the bit complexity of max-entropy distributions also is related to its stability. 

\paragraph{Stability.} 
Using the techniques in the proof of our bit complexity result,  we  show that max-entropy distributions are ``stable'': If we consider two marginal vectors $\theta_1 $ and $\theta_2$, then the total-variation distance between the arising max-entropy distributions is bounded by $\poly\inparen{m,\log \nfrac{1}{\eps}} \cdot \sqrt{\|\theta_1-\theta_2\|}.$
Note that there is no a priori reason for the stability bound to be $\poly\inparen{m,\log \nfrac{1}{\eps}}$ -- it could be as large as  $|\mathcal{F}|$ that could be exponential in $m$.
While stability of max-entropy distributions is  assumed to hold in practice, and is crucial for making the model meaningful, ours is the first result to formally establish polynomial stability for exponentially-sized support sets under   fairly general conditions.

\section{Our results}\label{ssec:results}

We provide simplified and slightly informal variants of our theorems  here.

\vspace{-3mm}
\subsection{Bit complexity of max-entropy distributions}
\vspace{-2mm}

We assume  that the marginal polytope has a low ``unary facet complexity'', as formalized below\footnote{{Note that this condition is different from the {\em binary facet complexity} defined in \cite{Grotschel1988}.}}. 
\begin{definition}[Unary facet complexity -- Informal; see Definition~\ref{def:fc}]\label{def:fc_informal}
Let $P \subseteq \R^m$ be a  polytope with integer vertices. The {\bf unary facet complexity} of $P$ ($\fc(P)$) is the smallest number $M \in \N$, such that $P$ can be described by linear inequalities with coefficients in  $\{-M,\ldots,0,\ldots, M\}$.
\end{definition}

\noindent
The class of polytopes with $\fc(P)=O(1)$ is very general and includes most polytopes of interest including matroid polytopes, matroid intersection polytopes,  matching polytope, submodular polytopes and  Newton polytopes of certain classes of polynomials (such as real stable polynomials).

The first result of this paper is  the below theorem, where $h_{\theta}(y)$ denotes the objective function in the dual program~\eqref{eq:dual_simplified} and $g(\theta)\in \R$ denotes the optimal value; i.e., $g(\theta):=\inf_{y\in \R^m} h_{\theta}(y)$.

\vspace{-2mm}
\begin{theorem}[Bit complexity of max-entropy - Simplified; see Theorem~\ref{thm:bound}]\label{thm:bound_informal}
There exists a bound $R_\eps=\mathrm{poly}\inparen{m, \log \frac{1}{\eps}}\in \R_{>0}$,   such that for every set $\cF\subseteq \Z^m$ with unary facet complexity of $\conv(\cF)$ polynomially bounded\footnote{Specifically, we require the bound on the unary complexity to be $m^c$ for some constant $c>0$ fixed in advance.}, for every $\eps>0$ we have
$$\forall ~ {\theta\in \conv(\cF)},~~ \exists ~{y\in B(0,R)},~~\mathrm{s.t.}~~ h_{\theta}(y) \leq g(\theta) + \eps.$$
\end{theorem}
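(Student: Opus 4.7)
The plan is to prove the bound by induction on $d = \dim(\conv(\cF))$, decomposing the dual variable according to the minimal face of $P := \conv(\cF)$ through $\theta$. Given $\theta \in P$, let $F$ be the minimal face of $P$ containing $\theta$, so that $\theta$ lies in its relative interior, and let $I$ index the facets $\langle a_i, x\rangle \le b_i$ of $P$ that are tight on all of $F$. Setting $V := \aff(F) - F$, the orthogonal complement $V^\perp$ is spanned by $\{a_i\}_{i\in I}$, and I decompose $y = y_\parallel + y_\perp$ with $y_\parallel \in V$ and $y_\perp \in V^\perp$.

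The orthogonal part is chosen to suppress the contribution of $\alpha \in \cF \setminus F$ to the log-sum-exp objective. For any such $\alpha$, integrality together with the unary facet complexity bound $\|a_i\|_\infty \le M = m^{O(1)}$ forces $\langle a_i, \alpha\rangle - b_i \le -1$ for at least one $i \in I$. Taking $y_\perp = T\sum_{i \in I} a_i$ therefore yields $\langle \alpha - \theta, y_\perp\rangle \le -T$ for every $\alpha \notin F$, while $\langle \alpha - \theta, y_\perp\rangle = 0$ for $\alpha \in F$, and $\|y_\perp\| \le T\cdot|I|\cdot M\sqrt{m}$. The parallel part is delegated to the subproblem: the restriction of the max-entropy program to $\cF \cap F$ with marginal $\theta$ is again of unary facet complexity at most $M$ (its defining inequalities are inherited from $P$), and lives in an affine space of strictly smaller dimension, so by the inductive hypothesis one obtains $y_\parallel \in V$ of norm $\poly(m, \log(1/\eps))$ that is $\eps/2$-approximately optimal for the subproblem. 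Summing the two pieces and applying $\log(A+B) \le \log A + B/A$ to the splitting of $h_\theta(y)$ into on-face and off-face contributions yields $h_\theta(y) \le g(\theta) + \eps$, using the fact that the full dual infimum coincides with the one restricted to $\cF \cap F$ (off-face terms can be driven to $0$ in the limit).

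The main obstacle will be that $T$ must also absorb the cross-term $\langle \alpha - \theta, y_\parallel\rangle$ for $\alpha \notin F$, and the naive estimate $\|y_\parallel\|\cdot\diam(P)$ is dangerous because vertices of $P$ can have coordinates as large as $m^{O(m)}$ by Cramer's rule, so $\diam(P)$ may well be super-polynomial. The technically delicate step, and the place where the unary structure must be used most heavily, is to show that this cross-term remains polynomial in $m$, $M$, and $\log(1/\eps)$: either by a refined choice of direction for $y_\perp$ that cancels the component of $y_\parallel$ visible to off-face points, or by a separate polyhedral lemma bounding the projections of integer points of $\cF$ onto $V$ in terms of the unary facet complexity. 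Once this lemma is in place, the recursion has depth at most $m$ with an additive $\poly(m, \log(1/\eps))$ contribution at each level, yielding the claimed bound $R_\eps = \poly(m, \log(1/\eps))$.
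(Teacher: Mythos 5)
Your proposal takes a genuinely different route from the paper's. The paper does not induct on dimension or decompose $y$ relative to the minimal face through $\theta$. Instead, starting from a near-optimal $y^\star$, it picks a vertex $\alpha^\star$ that \emph{maximizes} $\langle \alpha, y^\star\rangle$ over $\cF$, uses Farkas plus Carath\'eodory to write $y^\star = \sum_{i\in I_0}\beta_i a_i$ as a nonnegative combination of at most $m$ facet normals tight at $\alpha^\star$, and then simply truncates each $\beta_i$ to $\min(\Delta,\beta_i)$ for $\Delta = \log|\cF| + 2L_p + \log m + \log\frac{1}{\eps}$, checking coordinate by coordinate that the objective increases by at most $\eps/(2m)$. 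The whole point of that construction is that $\langle \alpha - \alpha^\star, a_i\rangle \le 0$ for \emph{every} $\alpha\in\cF$ and every $i\in I_0$ (and $\le -1$ for $\alpha$ off the facet $i_0$), so the truncation is monotone in the right direction and there is no cross-term at all; the length bound is $m\cdot\Delta\cdot\sqrt{m}M$, which is independent of $\diam(\cF)$.

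The gap you flag in your own write-up is fatal, and neither of the two fixes you sketch can close it. The cross-term $\langle\alpha-\theta, y_\parallel\rangle$ for $\alpha\notin F$ lives entirely in $V$, while $y_\perp$ lives in $V^\perp$, so no refined choice of $y_\perp$ can cancel it. And the polyhedral lemma you hope for is false: unary facet complexity does not control the diameter of $\cF$. Concretely, take $\cF = \{0,1\}^{m-1}\times\{0,1,\ldots,2^m\}$, so $P = [0,1]^{m-1}\times[0,2^m]$ has facet complexity $1$ and $\log|\cF| = O(m)$, yet for $\theta$ on the facet $\{x_1=0\}$ the off-face points $\alpha$ differ from $\theta$ by up to $2^m$ in the last coordinate, which lies in $V$. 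Any $y_\parallel$ with a nonzero last coordinate then forces $T = \Omega(2^m)$, while the paper's bound $R = O(m^{3/2}M(L_p + \log|\cF| + \log\frac{m}{\eps}))$ remains polynomial on this instance because it never pays the diameter — only $\log|\cF|$. This is precisely why the paper works with a vertex anchored to the \emph{dual} direction $y^\star$ rather than a face anchored to the primal point $\theta$: it keeps all the relevant inner products one-sided and lets integrality supply a gap of $1$, which the truncation threshold $\Delta$ can then beat with only logarithmic cost. A secondary, fixable issue: you use $y_\perp = T\sum_{i\in I}a_i$ with $I$ the set of \emph{all} facets tight on $F$, which can be exponentially large; you would need a Carath\'eodory-type reduction to at most $m$ of them while preserving the property that every off-face $\alpha$ strictly violates at least one selected inequality, which is not automatic for an arbitrary spanning subset.
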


\noindent
Here $B(0,R)$ denotes the unit $\ell_2$-ball of radius $R$ around the origin.
Note that the bound we provide
is {\em uniform} over the whole polytope $P$, meaning that the radius $R_\eps$ does not depend on $\theta$
-- this is crucial for some of the applications.
 The error of $\eps$ that we introduce when going from the optimal solution $y^\star$ to a near-optimal solution $y$ (in Theorem~\ref{thm:bound_informal}) translates to an $\ell_1$ distance of at most $\sqrt{\eps}$ between the max-entropy distribution $q^\star$ and the distribution $q$ obtained from the near-optimal dual solution $y$; see Corollary~\ref{cor:rep}.
This means that we do not introduce a large error when using approximate dual solutions and, importantly, we have a good control over this error as the bound on $R_\eps$ in Theorem~\ref{thm:bound_informal} depends on $\eps$ logarithmically.

Theorem~\ref{thm:bound_informal} implies polynomial time computability of maximum entropy distributions for polytopes with polynomial unary facet complexity and for all points $\theta \in \conv({\cF})$ (assuming existence of a suitable counting oracle) -- a question that was left open in the work of \cite{SinghV14}.\footnote{A variant of Theorem~\ref{thm:bound_informal} which applies to families $\cF \subseteq \N^m$ that are spanning trees of undirected graphs was obtained in~\cite{AGMGS10} and, for matroids (and more generally -- jump systems) in~\cite{AO17}.
However, these arguments  heavily rely on the combinatorial fact that such families admit greedy optimization algorithms and do not seem to generalize beyond this setting.} 
For a formal statement of this result and a proof, we refer to Section~\ref{sec:proof_comp}.

This result has both practical and theoretical consequences.
In practical situations, when the size of the discrete domain is large enough to force one to use the dual formulation of the max-entropy program, our bound may give a fast algorithm with provable guarantees. 
We show in Section \ref{eq:sec_boundary} that if the vector $\theta$ is an empirical mean then, even in the simplest of settings, it is exponentially close to the boundary of the marginal polytope justifying the need for our result. 
Theoretically, this result  helps derive a few  algorithmic results in a unified manner. For applications of Theorem~\ref{thm:bound_informal} to  the matrix scaling problem, computing Brascamp-Lieb constants, and computing polynomial-based relaxations for counting problems, we refer to Section~\ref{sec:applications}.
We also prove the following theorem that complements Theorem~\ref{thm:bound_informal}.

\vspace{-2mm}
\begin{theorem}[Lower bound]\label{thm:lower_bound}
For every bound $R=R\inparen{m, \log \frac{1}{\eps}}\in \R_{>0}$, polynomial in $m$ and $\log \frac{1}{\eps}$
there exists an $m \in \N$, an $\eps>0$ and a set $\cF \subseteq \Z^m$ with $\norm{\alpha} \leq O(m^{3/2})$ for every $\alpha \in \cF$ such that
$$\exists~{\theta\in \conv(\cF)}~~ \forall~{y\in B(0,R)}~~\mathrm{s.t.}~~ h_{\theta}(y) > g(\theta) + \eps.$$
\end{theorem}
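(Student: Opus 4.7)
The plan is to exhibit, for any polynomial $R(m,\log \tfrac{1}{\eps})$, an instance where every $\eps$-approximate dual optimum has norm exceeding $R$. The leverage point is that the bound $\|\alpha\|\le O(m^{3/2})$ does not prevent $\conv(\cF)$ from having super-polynomial unary facet complexity, and this excess complexity can be made to propagate into the norm of the dual.

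I would take $V\in\Z^{m\times m}$ to be the upper-bidiagonal matrix with $1$'s on the diagonal and $m$'s on the superdiagonal, and set $\cF=\{0,v_1,\ldots,v_m\}$ where $v_i$ is the $i$-th column of $V$. Then $\|v_i\|_2\le m\sqrt{2}=O(m^{3/2})$ and $\det V=1$, so $\conv(\cF)$ is a unimodular simplex. A direct computation gives $(V^{-1})_{ij}=(-m)^{j-i}$ for $j\ge i$ and $0$ otherwise, so the facet of $\conv(\cF)$ opposite the origin carries an integer normal with a coefficient of order $m^{m-1}$; in particular $\fc(\conv(\cF))=m^{\Omega(m)}$, which is super-polynomial in $m$.

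For the dual, fix a parameter $K>0$ and set $q_0=(m+e^K)^{-1}$, $q_1=e^K q_0$, $q_i=q_0$ for $2\le i\le m$, and $\theta=\sum_{i=1}^{m}q_i v_i$. Since $\cF$ is affinely independent, $(q_i)$ is the unique max-entropy distribution with marginal $\theta$, and the KKT equation $\langle v_i,y^\star\rangle=\log(q_i/q_0)$ reduces to $V^\top y^\star=L$ with $L_i=K\cdot\mathbf{1}_{\{i=1\}}$. Back-substitution yields $y^\star_j=(-m)^{j-1}K$, so $\|y^\star\|_2=\Theta(K\,m^{m-1})$, which is super-polynomial in $m$. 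To handle $\eps$-approximation, I would lower-bound the Hessian $\Sigma=\sum_\alpha q_\alpha(\alpha-\theta)(\alpha-\theta)^\top$. Since $u^\top\Sigma u=\mathrm{Var}_q(u^\top\alpha)$ and almost all mass sits on $v_1=e_1$, any minimum-variance direction is essentially forced to have $u_1=0$, reducing the analysis to bounding $\sigma_{\min}(M)$ for a lower-bidiagonal $(m-1)\times(m-1)$ matrix $M$ analogous to $V$. Using $\det M=1$ and $\|M^{-1}\|_F=O(m^{m-2})$ one gets $\sigma_{\min}(M)\ge m^{-(m-2)}$, hence $\lambda_{\min}(\Sigma)=\Omega(e^{-K}/m^{2m-4})$; by strong convexity the $\eps$-sublevel set of $h_\theta$ is contained in a ball of radius $O(m^{m-2}\sqrt{\eps\,e^K})$ around $y^\star$.

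Choosing $K=m$ and $\eps=e^{-2m}$ makes this sublevel-set radius $O(m^{m-2}e^{-m/2})=o(m^m)=o(\|y^\star\|_2)$, so every $\eps$-approximate $y^\star_\eps$ satisfies $\|y^\star_\eps\|_2=\Omega(m^m)$, beating any fixed polynomial $R(m,2m)$ once $m$ is taken large compared to the degree of $R$. The most delicate step will be the Hessian bound: one must verify both that the minimum-variance direction really satisfies $u_1\approx 0$ (since $q_1\approx 1$ pins down $u^\top v_1$) and that $\sigma_{\min}(M)\ge m^{-(m-2)}$, a bound that is essentially tight and encodes the precise interaction between the polynomial bound on vertex norms and the super-polynomial facet complexity that the theorem exposes.
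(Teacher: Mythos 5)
Your construction is genuinely different from the paper's. The paper builds $\cF$ from Alon--Vu ``flat'' $0$-$1$ polytopes, takes $\theta=0$ to be a vertex of $\conv(\cF)$, and works entirely with the gradient: it shows that $\|\nabla h_\theta(y)\|$ small forces $\langle y,\alpha\rangle\le -1$ for all $\alpha\ne 0$, and then a quadratic-programming duality argument (their Fact~\ref{fact:short_y}) converts the exponentially small distance $\dist(0,\conv(\cF\setminus\{0\}))$ into an exponential lower bound on $\|y\|$. Suboptimality is then inferred from the gradient via $L$-\emph{smoothness}, i.e.\ an \emph{upper} bound on the Hessian, which holds globally. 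Your bidiagonal unimodular simplex is a clean, fully explicit alternative, the facet-normal blow-up $\fc(\conv(\cF))=m^{\Omega(m)}$ is correct, and the exact KKT solve $V^\top y^\star=L$ giving $\|y^\star\|=\Theta(K m^{m-1})$ is right. Also your $\theta$ sits in the relative interior rather than at a vertex, which is a legitimate but structurally different regime.

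The gap is in the step from the exact $y^\star$ to $\eps$-approximate solutions. You invoke ``strong convexity'' with modulus $\mu:=\lambda_{\min}(\nabla^2 h_\theta(y^\star))=\Omega(e^{-K}/m^{2m-4})$ to contain the $\eps$-sublevel set in a ball of radius $\rho=O(\sqrt{\eps/\mu})=O(m^{m-2}\sqrt{\eps e^{K}})$. But $h_\theta$ is only \emph{locally} strongly convex: $\nabla^2 h_\theta(y)=\mathrm{Cov}_{q^{y}}(\alpha)$ changes with $y$ and degenerates as $q^{y}$ concentrates. A Hessian lower bound at $y^\star$ propagates only as far as $q^{y}\approx q^{y^\star}$, i.e.\ only while $\langle\alpha,y-y^\star\rangle=O(1)$ for all $\alpha$, which with $\|\alpha\|=O(m)$ gives control only on a ball of radius $O(1/m)$ around $y^\star$. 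Your claimed $\rho=O(m^{m-2}e^{-m/2})$ (with $K=m$, $\eps=e^{-2m}$) is enormously larger than $O(1/m)$, so the local Hessian bound cannot be extended there and the sublevel-set containment does not follow from what you wrote. (Contrast with the paper, which never needs a Hessian lower bound: its smoothness inequality $h_\theta(y)-g(\theta)\ge\|\nabla h_\theta(y)\|^2/L$ only uses the global Hessian \emph{upper} bound $L=O(d^2)$.) To repair your argument you would need to either (i) bound the gradient $\nabla h_\theta(y)=\theta^{y}-\theta$ from below directly for all $\|y\|\le R$ --- e.g.\ note that $\sigma_{\min}(V^\top)\gtrsim m^{-(m-1)}$ so $\|y-y^\star\|\gtrsim Km^{m-1}$ forces $\max_i|\langle v_i,y-y^\star\rangle|\gtrsim K$, and then turn that mismatch in exponents into a lower bound on $\|\theta^{y}-\theta\|$ --- or (ii) establish a Hessian lower bound that is uniform over the whole (large) candidate sublevel set. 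Neither is done in the sketch, and (ii) in particular looks hard because the Hessian really can collapse over that range; I would recommend recasting the argument in terms of gradients and smoothness as the paper does.
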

\noindent 
The above result relies on existence of ``flat'' $0-1$ polytopes by Alon and Vu \cite{AV97}, i.e., full-dimensional polytopes whose vertices are in $\{0,1\}^m$ and which fit between two $e^{-\Theta(m \log m)}$-close hyperplanes.
Its proof appears in Section \ref{sec:proof_lower}.
We note that the lower bound  in Theorem~\ref{thm:lower_bound} is based on a ``bad example'' $\cF \subseteq \Z^m$ with $\norm{\alpha} =O(m^{3/2})$ for every $\alpha\in \cF$. It is an interesting open question whether similar examples can be constructed in the $0-1$ regime, i.e., when $\cF \subseteq \{0,1\}^m$.

\vspace{-2mm}

\subsection{Stability of max-entropy distributions}

\begin{theorem}[Stability of  max-entropy distrib. - Simplified; see Theorem~\ref{thm:stability_gen}]\label{thm:stability}
There exists a bound $R_\eps=\mathrm{poly}\inparen{m, \log \frac{1}{\eps}}\in \R_{>0}$,      such that for every set $\cF\subseteq \Z^m$ with $\mathrm{fc}(\conv(\cF))$ is polynomially bounded and for every $\theta_1, \theta_2 \in P$, if $\norm{\theta_1 - \theta_2}_2 \leq \eps$ then 
$\norm{q^{\theta_1} - q^{\theta_2}}_1\leq R_{\eps} \sqrt{\eps}.$

\end{theorem}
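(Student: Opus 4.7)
The plan is to reduce the stability bound to Theorem~\ref{thm:bound_informal} via a swap-marginal argument. Fix $\theta_1,\theta_2\in P$ with $\|\theta_1-\theta_2\|_2\le\eps$. Applying Theorem~\ref{thm:bound_informal} twice, I obtain $y_1,y_2\in B(0,R)$ with $R=\poly(m,\log(1/\eps))$ and $h_{\theta_i}(y_i)\le g(\theta_i)+\eps$ for $i=1,2$.

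The first key observation is the elementary identity $h_{\theta_1}(y)-h_{\theta_2}(y)=\langle\theta_2-\theta_1,y\rangle$, which follows directly from the definition of $h_\theta$ in~\eqref{eq:dual_simplified} (the log-partition part is independent of $\theta$). Using this identity, $y_2$ can be re-interpreted as a candidate dual solution for $\theta_1$:
\[
h_{\theta_1}(y_2)=h_{\theta_2}(y_2)+\langle\theta_2-\theta_1,y_2\rangle\le g(\theta_2)+\eps+R\eps,
\]
and similarly $g(\theta_2)\le h_{\theta_2}(y_1)=h_{\theta_1}(y_1)+\langle\theta_1-\theta_2,y_1\rangle\le g(\theta_1)+\eps+R\eps$. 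Chaining these, $h_{\theta_1}(y_2)\le g(\theta_1)+2(1+R)\eps$, so $y_2$ is a $2(1+R)\eps$-approximate dual solution for $\theta_1$.

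Next I invoke Corollary~\ref{cor:rep}: a $\kappa$-approximate dual solution $y$ yields a Gibbs distribution $q^y$ with $\|q^y-q^{\theta}\|_1=O(\sqrt{\kappa})$, which is just Pinsker applied to the elementary identity $D_{KL}(q^{\theta}\|q^y)=h_{\theta}(y)-g(\theta)$. Hence $\|q^{y_2}-q^{\theta_1}\|_1=O(\sqrt{R\eps})$ and $\|q^{y_2}-q^{\theta_2}\|_1=O(\sqrt{\eps})$, and by the triangle inequality
\[
\|q^{\theta_1}-q^{\theta_2}\|_1=O(\sqrt{R\eps})=\poly(m,\log(1/\eps))\cdot\sqrt{\eps},
\]
as required.

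The main obstacle is the requirement of a \emph{uniform} radius $R=\poly(m,\log(1/\eps))$ valid for every $\theta\in P$, including marginals arbitrarily close to $\partial P$; without this the swap step $h_{\theta_1}(y_2)-h_{\theta_2}(y_2)\le R\eps$ would blow up. That is precisely the content of Theorem~\ref{thm:bound_informal}, and once it is in hand the swap-marginal identity together with Pinsker renders the rest of the argument essentially a single line. A minor subtlety is that one must work with near-optimal rather than optimal $y_i$ (the infimum need not be attained when $\theta_i\in\partial P$), but since all error terms degrade linearly in $\eps$ and only multiplicatively in $R$, this is absorbed into the final $\sqrt{R\eps}$ bound.
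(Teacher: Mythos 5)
Your proof is correct, and it takes a genuinely cleaner route than the paper's. The paper bounds the cross quantity $h_{\theta_1}(y_2)-g(\theta_1)$ by applying the first-order convexity inequality $h_{\theta_1}(y_2)\le h_{\theta_1}(y_1)+\langle\nabla_y h_{\theta_1}(y_2),y_2-y_1\rangle$, which forces it to control $\|\nabla_y h_{\theta_1}(y_2)\|$; this is done via the $2d^2$-smoothness of $h_\theta$ (Lemma~\ref{fact:basic}, part 2), introducing a dependence on the diameter $d$ and requiring the auxiliary precision $\delta$ to be tuned so that $4d\sqrt{\delta}=\eps$ at the end. Your swap-marginal identity $h_{\theta_1}(y)-h_{\theta_2}(y)=\langle\theta_2-\theta_1,y\rangle$ exploits the linearity of $h_\theta$ in $\theta$ exactly, so the cross-gap bound $h_{\theta_1}(y_2)\le g(\theta_1)+2(1+R)\eps$ follows from two applications of Cauchy--Schwarz and the trivial inequality $g(\theta_2)\le h_{\theta_2}(y_1)$, with no smoothness argument and no $\delta$-optimization. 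The only two things you genuinely need from the paper's toolkit are the uniform bound $R$ from Theorem~\ref{thm:bound_informal} and the KL identity $D_{KL}(q^\theta\,\|\,q^y)=h_\theta(y)-g(\theta)$ plus Pinsker; for the latter step the correct internal reference is Lemma~\ref{lemma:equality} together with Lemma~\ref{fact:basic} (part~1), not Corollary~\ref{cor:rep} (which bounds the gradient norm, i.e.\ $\|\theta^y-\theta\|$, rather than $\|q^y-q^\theta\|_1$) --- but this is a mislabeling, not a mathematical gap, since you state the identity you actually use. The final $O(\sqrt{R\eps})$ bound matches the paper's, with a somewhat more transparent constant.
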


\noindent
In machine learning and statistics applications,  $\theta$ is typically computed as an average of observed samples and our stability result implies that even if there is inverse polynomial (in $m$) variance in this empirical marginal, we can still recover a distribution close to the max-entropy distribution.
Thus, this result shows in a rigorous manner and in very general, exponential-sized domains, that max-entropy distributions provide  a stable model for density estimation.
The proof of Theorem~\ref{thm:stability} crucially relies on the techniques that go in the proof of Theorem~\ref{thm:bound_informal} and is presented in Section \ref{sec:stability}.
It would be  interesting   to improve $\sqrt{\eps}$ to $\eps$ in our stability bound.

\vspace{-3mm}
\subsection{Computability of Brascamp-Lieb constants in the rank-1 regime}
As a corollary of our bit complexity result we present an application to computing Brascamp-Lieb constants in the rank-1 regime.
Brascamp-Lieb inequalities are an ultimate generalization of many inequalities used in analysis and all of mathematics, such as the H\"older inequality and Loomis-Whitney (\cite{Brascamp02, Lieb90}). 
Recently, these inequalities have been studied from the computational point of view~(\cite{GGOW17}), where the main problem is to compute quantities of the following form -- called Brascamp-Lieb constants.
Given a collection of matrices $B=(B_1, B_2, \ldots, B_m)$ with $B_j \in \R^{n_j \times n}$ and a point $p\in \R^m_{\geq 0}$  compute
\begin{equation}\label{eq:bl_const}
\textstyle \bl(B,p):=\inf\inbraces{\frac{\det\inparen{\sum_{j=1}^m p_j B_j^\top X_j B_j}}{\prod_{j=1}^m \det(X_j)^{p_j}}: X_j\in \R^{n_j \times n_j}, X_j\succeq 0, j=1,2, \ldots, m}.
\end{equation}
\noindent
The constant $\bl(B,p)$ is non-zero whenever $p$ belongs to the so-called Brascamp-Lieb polytope $P_B \subseteq \R^m$:
$$\textstyle P_B := \inbraces{p\in \R^m_{\geq 0}: \sum_{j=1}^m p_j \dim(B_j U) \geq \dim(U), \mbox{ for every lin. subspace }U\subseteq \R^n}.$$

\noindent 
Recently~\cite{GGOW17} gave a method for calculating the Brascamp-Lieb constant in polynomial time when the vector $p$ is rational and given in unary. 
Note that this does not imply a polynomial time algorithm in the classical sense (when the vector $p$ is given in binary).
Here we consider the special, but already non-trivial, case  when the matrices are of rank $1$; i.e., $B_j\in \R^{1\times n}$ for every $j=1,2, \ldots, m$. 
By interpreting Brascamp-Lieb constants in the rank-$1$ regime as solutions to certain entropy-maximization problems we can deduce from Theorem~\ref{thm:bound_informal}  that they can be calculated, up to a multiplicative precision $\eps>0$, in time polynomial in $m$ and $\log \frac{1}{\eps}$.

Our entropy interpretation also leads to an algorithm for computing worst-case Brascamp-Lieb constants over the whole Brascamp-Lieb polytope, i.e., $C(B):=\sup_{p\in P_B}\bl(B,p)$,  in the rank-1 regime.
This quantity can be used  as a universal constant (for any $p\in P_B$) for the  reverse Brascamp-Lieb inequality~(\cite{Barthe98}). 

\begin{theorem}[Computability of BL-constants]\label{thm:bl_comp}
Consider a sequence of $m$ real-valued matrices $B_1, B_2, \ldots, B_m \in \R^{1 \times n}$. There is an algorithm that
\begin{enumerate}
\item computes a multiplicative $(1+\eps)$-approximation of the the Brascamp-Lieb constant $\bl(B,p)$ in time polynomial in the description size of $(B_1, B_2, \ldots, B_m, p)$, $m$ and $\log \frac{1}{\eps}$,
\item computes a multiplicative $(1+\eps)$-approximation of the worst-case Brascamp-Lieb constant $\sup_{p\in P_B} \bl(B,p)$ in time polynomial in the description size of $(B_1, B_2, \ldots, B_m)$, $m$ and $\log \frac{1}{\eps}$.
\end{enumerate}
\end{theorem}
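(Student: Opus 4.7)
The plan is to interpret the rank-$1$ Brascamp-Lieb constant as the value of a max-entropy dual in the form of~\eqref{eq:dual_simplified} and then invoke Theorem~\ref{thm:bound_informal}. In the rank-$1$ setting each $X_j$ is a positive scalar $x_j$; I would set $x_j = e^{y_j}$ and apply the Cauchy-Binet identity
\[
\det\Bigl(\sum_{j=1}^m p_j x_j B_j^\top B_j\Bigr) \;=\; \sum_{\substack{S\subseteq [m]\\ |S|=n}} \det(B_S)^2 \prod_{j\in S} p_j x_j,
\]
where $B_S$ is the $n\times n$ submatrix with rows $\{B_j\}_{j\in S}$. Combined with $\prod_j x_j^{p_j} = e^{\langle p,y\rangle}$, this rewrites~\eqref{eq:bl_const} as
\[
\log \bl(B,p) \;=\; \inf_{y\in \R^m} h_p(y), \qquad h_p(y) \;:=\; \log \sum_{S\in \cF} \mu_S(p)\, e^{\inangle{\chi_S-p,\,y}},
\]
with $\cF = \{\chi_S : |S|=n,\ \det B_S\neq 0\}\subseteq\{0,1\}^m$ and prior $\mu_S(p):=\det(B_S)^2\prod_{j\in S}p_j$. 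This is exactly the max-entropy dual with marginal $\theta=p$ on the support $\cF$.

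Next, I would verify the hypotheses of Theorem~\ref{thm:bound_informal}: $\conv(\cF)$ is the base polytope of the linear matroid of $(B_1,\dots,B_m)$, cut out by matroid-rank inequalities with $\{0,1\}$-coefficients, so $\fc(\conv(\cF))\le 1$; and the prior $\log \mu_S(p)$ has polynomial bit complexity in the input. Theorem~\ref{thm:bound_informal} then supplies a radius $R_\eps = \poly(m,\log 1/\eps)$ and a point $y^\star\in B(0,R_\eps)$ that is $\eps$-optimal for $h_p$. A second use of Cauchy-Binet collapses the evaluation of $h_p$ to a single $n\times n$ log-determinant,
\[
h_p(y) \;=\; \log\det\Bigl(\sum_{j=1}^m p_j e^{y_j} B_j^\top B_j\Bigr) - \inangle{p,y},
\]
so $h_p$ and $\nabla h_p$ are computable in $\poly(n,m)$ arithmetic operations. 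Running any standard convex minimization (e.g.\ ellipsoid or an accelerated first-order method) of $h_p$ on $B(0,R_\eps)$ produces a point at which $h_p$ is within additive $\eps$ of the optimum; exponentiating gives the claimed multiplicative $(1+\eps)$-approximation to $\bl(B,p)$, proving part (1).

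For part (2), the plan is to reduce $\sup_{p\in P_B}\bl(B,p)$ to a concave maximization problem. The identity
\[
\log \bl(B,p) \;=\; \inf_{X_j\succ 0}\Bigl[\log\det\Bigl(\sum_j p_j B_j^\top X_j B_j\Bigr) - \sum_j p_j \log\det X_j\Bigr]
\]
exhibits $\log \bl(B,p)$ as a pointwise infimum over $X$ of expressions that are concave in $p$ (a log-determinant of a linear map in $p$ plus a linear term), hence $\log\bl(B,\cdot)$ is concave on $P_B$. In the rank-$1$ regime, $P_B$ is (up to scaling) a matroid-type polytope of constant unary facet complexity, and a separation oracle is provided by computing ranks of submatrices of $B$. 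Combining this with the evaluation oracle from part (1), one may apply the ellipsoid method for concave maximization over $P_B$ to locate an $\eps$-optimal $p^\star$, yielding the desired $(1+\eps)$-approximation.

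The hard part will be dealing with the boundary of $P_B$, where $\bl(B,p)$ vanishes and $\log\bl(B,p)\to -\infty$, so a direct application of the ellipsoid method need not converge in polynomial time and the prior $\log\mu_S(p)$ may explode in magnitude. I would handle this by shrinking $P_B$ slightly, restricting to the sub-polytope at inverse-polynomial distance from each defining facet, and arguing via the uniformity in $\theta$ provided by Theorem~\ref{thm:bound_informal} that this perturbation changes $\sup_{p\in P_B}\log\bl(B,p)$ by at most $O(\eps)$. On this shrunken domain every evaluation of $\log\bl(B,p)$ inherits a uniform $\poly(m,\log 1/\eps)$-bit dual representation from Theorem~\ref{thm:bound_informal}, making the concave maximization genuinely polynomial-time.
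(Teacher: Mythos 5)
Your proposal follows essentially the same route as the paper: both use Cauchy--Binet to rewrite the rank-$1$ numerator as $\sum_{|S|=n}\det(V_S)^2 \prod_{j\in S}p_j x_j$, identify the support $\cF$ with the bases of the linear matroid on $\{v_1,\dots,v_m\}$ (so $\fc(\conv(\cF))\le 1$), cast $\log\bl(B,p)$ as a generalized max-entropy dual $\inf_y h_p(y)$, invoke Theorem~\ref{thm:bound}/Theorem~\ref{thm:comp_entro} for a $\poly(m,\log\frac1\eps)$ dual radius, and for part (2) maximize the resulting concave function of $p$ over $P_B$ by the ellipsoid method as in Theorem~\ref{thm:compsv}. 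The one genuine (if modest) difference is the concavity argument: the paper establishes that $p\mapsto\log\sum_S p^S x^S\det(V_S V_S^\top)$ is concave on $\R^m_{>0}$ by appealing to real stability of this polynomial in $p$ (Fact~\ref{fact:rs_concave}), while you derive concavity of $\log\bl(B,\cdot)$ directly from the variational formula as a pointwise infimum over $X\succ 0$ of $\log\det$-of-a-linear-map (concave) plus a linear term. Your argument is a bit more elementary, avoids real stability altogether, and is not special to rank $1$ (even though the rest of the reduction is). You also explicitly flag something the paper leaves implicit when it defers to the proof of Theorem~\ref{thm:compsv}: here the prior $\mu_S(p)=p^S\det(V_S V_S^\top)$ depends on the outer variable $p$ and degenerates near $\partial P_B$, so the bit-complexity bound $L_{\mu(p)}$ from Theorem~\ref{thm:bound} blows up there. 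Your fix --- restricting to a slightly shrunken $P_B$ and arguing the sup moves by $O(\eps)$ --- is the right idea, though as stated it is not quite justified by Theorem~\ref{thm:bound_informal} alone (that theorem controls the dual radius, not Lipschitzness of $\log\bl(B,\cdot)$); the standard argument would instead use concavity of $\log\bl$ together with a lower bound on its value at, say, the analytic center of $P_B$ to bound how much can be lost by shrinking, which is the same level of detail the paper itself omits.
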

\noindent
For the proof, we refer the reader to Section~\ref{sec:app_bl}.

\section{Technical overview}\label{sec:overview}
\parag{Theorem~\ref{thm:bound_informal} -- previous work and possible approaches. }
In the context of Theorem~\ref{thm:bound_informal}, the result of \cite{SinghV14} can be restated as follows: whenever a ball of radius $\eta>0$ centered at $\theta$ is contained in $P:=\conv(\cF)$   the optimal solution $y^\star$ of~\eqref{eq:dual_simplified} exists and its length $\norm{y^\star}$ is bounded by $O\inparen{\frac{\log| \cF| }{\eta}}$.
The proof is simple:  the $\frac{1}{\eta}$ term comes from the fact that after scaling by $\log |\cF|$, the optimal solution $y^\star$ of the dual program belongs to the polar of the radius-$\eta$ ball around $\theta$. 
(The polar of radius-$\eta$ ball is itself a ball of radius $1/\eta$.) 

One could consider the following two  approaches to prove Theorem~\ref{thm:bound_informal} that try to take advantage of the above mentioned bound: {\em centering} and {\em projection}.
Both are based on slightly moving $\theta$ to a new point $\theta'$ so that the result by~\cite{SinghV14} is applicable to $\theta'$ and then reasoning that the small shift does not affect the optimal dual solution.
Centering is based on moving the point more towards the interior, e.g., by taking the centroid of the polytope $P$ (which is far from the boundary) and taking a small step from $\theta$ towards $\theta'$.
One can then prove that $\theta'$ is well in the interior of the polytope, and hence a suitable bound for it follows.
The second idea is based on projections: start with any point $\theta$, if $\theta$ is already (inverse-polynomially) far   from the boundary of $P$, then the result of~\cite{SinghV14} implies a suitable bound.
Otherwise, project the point $\theta$ onto the closest facet of $P$ and  recurse. 
By doing this, either the resulting new point $\theta'$ will end up in a vertex of $P$, or on a lower-dimensional face of $P$, where it is far  from the boundary. 

In both approaches, proving the result for the new point $\theta'$ is easy, given~\cite{SinghV14}; what remains is to  bound  the error introduced when moving from $\theta$ to $\theta'$.
Proving such a bound on the error turns out to be a non-trivial task. 
In the case of centering one has to pick an appropriate direction along which the point $\theta$ is moved.
Similarly, the first challenge in the projection approach is to even define a suitable ``projection operator'' on a facet which would behave as expected and do not cause the point $\theta$ to land outside of the polytope $P$ (as  Euclidean projections might do).
An issue which concerns both approaches is to deal with the behavior of the function $g$ close to the boundary, where it can be shown to be non-Lipschitz\footnote{In the simple case when $\cF:=\{0,1\}$, the function $g:[0,1] \to \R$ is of the form $g(\theta)=-\theta \log \theta-(1-\theta)\log (1-\theta)$. 
One can see that on every interval $(\eps, 1-\eps)$ for $\eps>0$ the function $g$ is Lipschitz, but it is not Lipschitz on $(0,1)$.} and hence very susceptible to local perturbations.
More precisely: moving from $\theta$ to $\theta'$ guarantees that the gradient $\nabla_y h_{\theta}(y)$ at $y=y'^\star$ -- the dual optimal solution at $\theta'$ -- is small, however, it is a challenge to derive a guarantee on the gap $g(\theta) - h_{\theta}(y'^\star)$ as the function $y\mapsto h_{\theta}(y)$ is not strongly convex\footnote{One can again consider the case of $\cF:=\{0,1\}$ -- the function $h(0,y)$ is then of the form $h(0,y)=\log\inparen{1+e^y}$. Note that $h(0,y)$ is a convex function of $y$, but $\frac{d^2}{d y^2} h(0,y) \to 0$ whenever $y \to \pm \infty$, hence the function is essentially ``flat'' at infinity, and not strongly convex.}.
In fact, even a weaker claim that $g(\theta)\approx g(\theta')$ does not follow from~\cite{SinghV14}, since the convexity argument (as in Lemma 7.5 in the full version of ~\cite{SinghV14}) only allows one to prove that $g(\theta')\geq g(\theta) - O(\eps)$, where $\eps$ denotes the distance between $\theta$ and its ``centering'' $\theta'$, but not in the opposite direction, as required.

Our proof of Theorem~\ref{thm:bound_informal} is based on a purely geometric reasoning and bypasses the above obstacles by working entirely in the dual space (working in which we believe is necessary).
This allows us to appropriately capture the geometry of sub-level sets of $h_{\theta}(y)$ and as a consequence, understand effects of seemingly large perturbations in $y$ which only lead to small changes in the function value.
When tracked in the primal domain, the proof resembles the ``centering'' idea, however, the implicit direction to move $\theta$ along is not easy to come up with (or analyze) just from the primal perspective. 

\parag{Theorem~\ref{thm:bound_informal} -- proof overview. } 
At a high level, in the proof we consider the optimal dual solution $y^\star$ and first identify a vertex $\alpha^\star$ such that $y^\star$ belongs to $C_{\alpha^\star}$ -- the normal cone at $\alpha^\star$ (i.e., the set of directions $z$ such that all $\alpha^\star+tz \in P$ for sufficiently small $t>0$).
Subsequently, a  projection operation of $y^\star$ with respect to the cone $C_{\alpha^\star}$ is used to find a vector $y^{\circ}$ -- a witness for a short and close to optimal solution of the dual problem.
The proof can be decomposed naturally into the following three steps, which we subsequently explain in more detail.
\begin{enumerate}[noitemsep,topsep=0pt,label=(\alph*)]
\item Identify a ``good'' basis for the dual space with respect to $\theta$.
\item Truncate the optimal dual solution with respect to the basis in (a).
\item Establish a bound on the length of the truncated solution.
\end{enumerate}
In what follows, we assume for simplicity that the polytope $\conv(\cF)$ is full-dimensional.
Given any $\theta\in P$ and a point $y^\star \in \R^m$ which satisfies $h_{\theta}(y^\star) \leq g(\theta) + \nfrac{\eps}{2}$ (i.e., is close to optimal\footnote{Note that there might not exist a point $y^\star$ such that $g(\theta)=h_{\theta}(y^\star)$ when $\theta$ is on the boundary of $P$; 
that is why we allow a slack of $\eps/2$.}) we aim to find a point $y^\circ$ whose length is polynomial in $m$ and $\log \nfrac{1}{\eps}$ such that $h_{\theta}(y^\circ) \leq h_{\theta}(y^\star)+\nfrac{\eps}{2}$. 

\parag{Step (a) Good basis.} We first identify a subset $I_0 \subseteq I$ such that $\{a_i : i\in I_0\}$ is a basis of $\R^m$ and $y^\star$ is expressed as a nonnegative linear combination in this basis, i.e., $y^\star = \sum_{i\in I_0} \beta_i a_i$ with $\beta\geq 0$.
The basis $I_0$ is chosen as a basis of tight constraints at a point $\alpha^\star \in \cF$, i.e., $\alpha^\star$ satisfies $\inangle{a_i,\alpha^\star} = b_i$ for $i\in I_0$.
This follows by selecting an $\alpha$ that maximizes $\inangle{\alpha, y^\star}$ over all $\alpha \in \cF$ and invoking Farkas' lemma and Caratheodory's theorem.
This part of the reasoning does not make any assumptions on the polytope and only relies  on the convexity of $P$.

\parag{Step (b) Truncation of coefficients.} Subsequently, we prove that $y^\circ := \sum_{i\in I_0} \min(\Delta, \beta_i) a_i $ satisfies the claim stated above, for a suitable choice of $\Delta$, polynomial in the considered parameters.
The bound $h_{\theta}(y^\circ) \leq h_{\theta}(y^\star)+\nfrac{\eps}{2}$ is proved by replacing $\beta_i$ by $\min(\Delta, \beta_i)$ one by one and showing that the value $h_{\theta}(\cdot)$ does not increase by more than $\nfrac{\eps}{2m}$.
This relies on a careful analysis of the effect such a perturbation has on the function and crucially uses the fact that the coefficients $\beta_i$ for $i\in I_0$ are nonnegative.
Most importantly, we rely on the fact that the coefficients of the inequalities defining $P$ are integral; and hence for any point $\alpha \in \cF$ which does not lie on a facet $\inangle{a_i, x}=b_i$ for some $i\in I_0$ we have $\inangle{\alpha^\star - \alpha, a_i}\geq 1$.

\parag{Step (c) Concluding a polynomial bound.} To bound the length of $y^\circ$ note first that all the vectors $\{a_i\}_{i\in I_0}$ are short, i.e., $\norm{a_i}_2 \leq \norm{a_i}_\infty \cdot m \leq \fc(P) \cdot m = \poly(m)$, because we assume that the unary facet complexity of $P$ is polynomially bounded.
Further, from the triangle inequality, we obtain $\norm{y^{\circ}}\leq m \cdot \Delta \cdot \poly(m) = \poly(m, \log \frac{1}{\eps})$.
Thus, we conclude the proof of Theorem \ref{thm:bound_informal}.

\parag{Theorem~\ref{thm:lower_bound} -- proof overview. } The proof of Theorem~\ref{thm:lower_bound} is based on existence of so-called {\it flat} 0-1 polytopes.
These are polytopes of the form $\conv\{\alpha_0,  \ldots, \alpha_m\}$ with $\alpha_0, \ldots, \alpha_m \in \{0,1\}^m$ such that the distance from $\alpha_0=0$ to the affine subspace $H$ generated by $\alpha_1, \ldots, \alpha_m$ is exponentially small.
The existence of such configurations was proved in~\cite{AV97}.
Given such a polytope we consider the lattice generated on $H$ by the points $\alpha_1, \ldots, \alpha_m$ and construct a new polytope by taking a certain finite subset of such lattice points and the point $0$.
The vertices of the new polytope are still integral and have relatively small entries (polynomial in $m$). 
Moreover, the projection of $0$ onto $H$ lies within the opposite facet.
The family $\cF$ is defined to be all the vertices of the newly constructed polytope, $\theta$ is chosen to be $0$ and we pick $\eps\approx e^{-m}$.

To prove that for every short $y\in \R^m$ the gap between $h_{\theta}(y)-g(\theta)$ is significant we consider the gradient $\nabla_y h_{\theta}(y)$. 
Intuitively, if the gradient is large (in magnitude) at a point $y$, then $y$ cannot be an approximately optimal solution, hence it is enough to show that every short vector $y$ admits a suitably long gradient. 
For this one can show that the gradient at $y$ is  given by $\theta - \theta^y$ where $\theta^y$ is the expectation of a distribution defined by $y$.

In order to make $\theta^y$ $\eps$-close to $\theta=0$, one has to ensure that $\inangle{0,y} - \inangle{\alpha, y} \gtrsim \Omega(1)$ for all $\alpha \in \cF \setminus \{0\}$.
However, by introducing an auxiliary optimization problem (see Fact~\ref{fact:short_y}) we show that this can happen only when $\norm{y}$ is roughly, inverse-proportional to the distance from $0$ to $\conv\inparen{\cF \setminus \{0\}}$.
Since the distance is exponentially small, we arrive at a  lower-bound on $\norm{y}$.

\parag{Theorem~\ref{thm:stability} -- proof overview. } The proof of Theorem~\ref{thm:stability} crucially relies on the bound established in Theorem~\ref{thm:bound_informal}. One starts by picking any ``close-by'' points $\theta_1, \theta_2$ in the polytope $P$, we denote $\norm{\theta_1 - \theta_2}=\eps$. The goal is to show that the distributions $q^{\theta_1}$ and $q^{\theta_2}$ are also close (the distance can be upper-bounded by some function of $\eps$). 

To this end, we first find dual solutions $y_1, y_2\in \R^m$ that have low bit complexity and, roughly, have dual suboptimality gap $\leq \eps$. If we denote by $q^{y_1}$ and $q^{y_2}$ the distributions defined by these two dual solutions, then one can show by a simple calculation that
\begin{equation}\label{eq:bound_kl}
\textstyle KL(q^{\theta_i},q^{y_i}) = h_{\theta_i}(y_i) - g(\theta_i)\leq  \eps \end{equation}
for $i=1,2$, and further, using Pinsker's inequality $\norm{q^{\theta_i} - q^{y_i}}_1 \leq \sqrt{\eps}$. While this gives us some handle on $q^{\theta_1}$ and $q^{\theta_2}$ with respect to the dual solutions $y_1, y_2$, we still do not have any control on the distance between $q^{y_1}$ and $q^{y_2}$. To bypass this problem, we again take advantage of the fact in~\eqref{eq:bound_kl} and consider the gap
$h_{\theta_1}(y_2) - g(\theta_1),$
since this is, roughly, an upper bound on the distance between $q^{\theta_1}$ and $q^{y_2}$, thus showing that it is small would mean we are done. To show that this suboptimality gap is small we use the fact that $h_{\theta_1}(y_1) - g(\theta_1) \leq \eps$ and that $h_{\theta_1}(y_2) \approx h_{\theta_1}(y_1)$, more precisely (by convexity)
$$\textstyle h_{\theta_1}(y_2) - h_{\theta_1}(y_1)\leq \norm{\nabla_y h_{\theta_1}(y_1)}_2 \cdot \norm{y_1 - y_2}_2.$$
Here $\norm{\nabla_y h_{\theta_1}(y_1)}_2$ can be easily bounded by a function of $\eps$ -- because $y\mapsto h_{\theta}(y)$ is a smooth function and $y_1$ is $\eps$-close to the optimal solution. Further, crucially, $\norm{y_1 - y_2}$ is bounded by a polynomial in $m$ and $\log \frac{1}{\eps}$, as a consequence of Theorem~\ref{thm:bound_informal}, and this allows as to arrive at the claimed bound.

\vspace{-4mm}

\section{Proof of the bit complexity upper bound}\label{sec:proof_struct}

We start by reintroducing several notions and restating the structural result, as in Section~\ref{ssec:results} only simplified or informal definitions and theorem statements were given.
Consider a finite subset $\cF \subseteq \Z^m$ of the integer lattice and a positive function $p: \cF \to \R_{>0}$.
We will use $p_\alpha$ to denote the value of the function at a point $\alpha \in \cF$, thus treating $p$ as an $|\cF|-$dimensional vector with coordinates indexed by $\cF$. For any $\theta \in \R^m$ and $y\in \R^m$ we define the following generalized max-entropy program

\begin{equation}
\begin{aligned}
	\max ~~ & \sum_{\alpha \in \cF}q_\alpha \log \frac{p_\alpha}{q_\alpha},&\\
		\st ~~ & \sum_{\alpha \in \cF} q_\alpha \cdot \alpha = \theta,  \; \; \; 
		  \sum_{\alpha \in \cF} q_\alpha=1, \; \; \; 
	 q\geq 0.
\end{aligned}
\label{eq:gen_max_entropy}	
\end{equation}
\noindent
In the case when $p$ is normalized so that $\sum_{\alpha \in \cF} p_\alpha=1$, \eqref{eq:gen_max_entropy} describes the distribution closest to $p$ (in $KL$-distance), whose expectation is equal to $\theta$. 
In the case when $p\equiv 1$ (corresponds to the uniform distribution) the above program asks simply for a max-entropy distribution with expectation $\theta$, as in Section~\ref{ssec:results}.
Let us also extend the definition of the dual program, as introduced in Section~\ref{ssec:results} to capture general functions $p$ not only $p \equiv 1$.

\begin{equation}\label{eq:h}
\begin{aligned}
\textstyle g(\theta) := \inf_{y\in \R^m} h_{\theta}(y):=\inf_{y\in \R^m} \log \inparen{\sum_{\alpha \in \cF} p_\alpha e^{\inangle{\alpha - \theta,y}}}.
\end{aligned}
\end{equation}
\noindent 
The following notion of facet complexity of a polytope plays an important role in our main result.
\begin{definition}[Unary facet complexity]\label{def:fc}
Let $P \subseteq \R^m$ be a convex polytope with integer vertices. Let $M\in \N$ be the smallest integer such that $P$ has a description of the form
\begin{equation}\label{eq:P_desc}
\textstyle P=\{x\in \R^m: \inangle{a_i, x} \leq b_i, \mbox{ for }i\in I\}\cap H
\end{equation}
where $I$ is a finite index set, $a_i \in \Z^m$, $\norm{a_i}_\infty\leq M$ and $b_i\in \R$ for $i\in I$, and $H$ is a linear subspace of $\R^m$. Then we call $M$ the {\em unary facet complexity} of $P$ and denote $\fc(P)=M$.
\end{definition}
 Observe that in a description of a polytope $P$ as in the definition above we could have also included $H$ in the first term of~\eqref{eq:P_desc}, by adding $\inangle{c,x}\leq d$ and $\inangle{c,x} \geq d$, for every equation $\inangle{c,x}=d$ defining $H$. 
 However, the unary facet complexity defined as above might be significantly lower than one measured with respect to all (including equality) constraints and, as it turns out, this is the right measure to study the bit complexity of close to optimal solutions of~\eqref{eq:gen_max_entropy}.

To state our main result also a notion of {\it bit complexity} of a function $p: \cF \to \R_{>0}$ is required. We denote it by $L_p$ and define as
$L_p:=\max_{\alpha \in \cF} |\log p_\alpha|.$
Note that $L_p$ is finite, because we assume that $p_\alpha>0$ for every $\alpha \in \cF$. 
It represents, roughly, the maximum\footnote{Importantly, the complexity measure $L_p$ is the {\it maximum} -- not the {\it total} -- number of bits required to store any of the coefficients. 
The latter is always at least $|\cF|$, hence typically exponential in $m$.} number of bits required to store the binary representation of any $p_\alpha$ (for $\alpha \in \cF$).\footnote{In this paper, we use $\norm{x}$ to denote the Euclidean $\ell_2$ norm of $x$. This choice of a norm is by no means essential.}

\begin{theorem}[Bit complexity upper bound]\label{thm:bound}
Let $\cF$ be any finite subset of $\Z^m$
and let $M$ be the unary facet complexity of $\conv(\cF)$. Then, for every function $p: \cF \to \R_{>0}$ and for every $\eps>0$ there exists a number $R=O(m^{3/2}\cdot M \cdot (L_p+\log \abs{\cF} + \log \frac{m}{\eps}))$ such that 
$$\forall_{\theta\in P}~ \exists_{y\in B(0,R)}~~ h_{\theta}(y) \leq g(\theta) + \eps,$$
where $h$ and $g$ are defined as in~\eqref{eq:h}.
\end{theorem}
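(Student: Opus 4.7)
The plan is to follow the three-step geometric scheme described in Section~\ref{sec:overview}. Since $h_\theta$ is continuous and bounded below by $g(\theta)$, fix $y^\star\in\R^m$ with $h_\theta(y^\star)\le g(\theta)+\eps/2$; the goal is then to produce $y^\circ\in B(0,R)$ with $h_\theta(y^\circ)\le h_\theta(y^\star)+\eps/2$. I would first assume $P$ is full-dimensional, reducing the general case by projecting $y^\star$ onto $H$ and using $h_\theta(y+z)=h_\theta(y)$ for $z\in H^\perp$ (since $\alpha-\theta\in H$ for every $\alpha\in\cF$, $\theta\in P$). For step~(a), let $\alpha^\star\in\cF$ be a maximizer of $\alpha\mapsto\langle\alpha,y^\star\rangle$. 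Then $y^\star$ lies in the outward normal cone of $P$ at $\alpha^\star$, which is the conic hull of $\{a_i:i\in I_0\}$ with $I_0:=\{i\in I:\langle a_i,\alpha^\star\rangle=b_i\}$. Farkas's lemma writes $y^\star=\sum_{i\in I_0}\beta_ia_i$ with all $\beta_i\ge 0$, and Caratheodory's theorem shrinks $I_0$ to an $m$-element subset whose $a_i$ form a basis of $\R^m$.

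For step~(b), set $\Delta:=C(L_p+\log|\cF|+\log(m/\eps))$ for a large absolute constant $C$ and define $y^\circ:=\sum_{i\in I_0}\min(\Delta,\beta_i)\,a_i$. I would prove $h_\theta(y^\circ)\le h_\theta(y^\star)+\eps/2$ by replacing $\beta_j$ with $\min(\Delta,\beta_j)$ one coordinate at a time, bounding each increment in $h_\theta$ by $\eps/(2m)$. The invariant maintained is that the current iterate $y$ keeps all of its coefficients $\beta_i^{\mathrm{cur}}$ in the basis $\{a_i\}$ nonnegative; consequently $\alpha^\star$ remains a maximizer of $\langle\cdot,y\rangle$ on $\cF$ and
\[
\langle\alpha-\alpha^\star,y\rangle\le\beta_j^{\mathrm{cur}}(\langle\alpha,a_j\rangle-b_j)\le -k_\alpha\,\beta_j^{\mathrm{cur}},
\]
where $k_\alpha:=b_j-\langle a_j,\alpha\rangle\ge 0$ and, crucially, $k_\alpha\ge 1$ whenever $\alpha\notin\cF_{=}:=\{\gamma\in\cF:\langle a_j,\gamma\rangle=b_j\}$ by integrality of $a_j,\alpha,\alpha^\star$. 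Letting $t:=\beta_j^{\mathrm{cur}}-\Delta\ge 0$ and $d:=b_j-\langle a_j,\theta\rangle\ge 0$ and factoring $e^{-td}$, a direct calculation gives
\[
e^{h_\theta(y-ta_j)-h_\theta(y)}=e^{-td}\cdot\frac{\sum_{\alpha\in\cF}p_\alpha e^{\langle\alpha,y\rangle}e^{tk_\alpha}}{\sum_{\alpha\in\cF}p_\alpha e^{\langle\alpha,y\rangle}}.
\]
On $\cF_{=}$ we have $e^{tk_\alpha}=1$, so the on-facet part of the numerator is at most the full denominator; on $\cF\setminus\cF_{=}$ the invariant gives $p_\alpha e^{\langle\alpha,y\rangle}e^{tk_\alpha}\le e^{L_p}e^{\langle\alpha^\star,y\rangle}e^{-k_\alpha\Delta}\le e^{L_p}e^{\langle\alpha^\star,y\rangle}e^{-\Delta}$, while the denominator is at least $e^{-L_p}e^{\langle\alpha^\star,y\rangle}$. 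Combined with $e^{-td}\le 1$, this yields the per-swap bound $\le 1+|\cF|e^{2L_p-\Delta}\le e^{\eps/(2m)}$ for the stated $\Delta$.

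For step~(c), each $a_i$ is integer with $\|a_i\|_\infty\le M$, so $\|a_i\|_2\le M\sqrt m$. With $|I_0|\le m$ and coefficients bounded by $\Delta$, the triangle inequality gives $\|y^\circ\|\le m\cdot\Delta\cdot M\sqrt m=O(m^{3/2}\,M\,(L_p+\log|\cF|+\log(m/\eps)))$, matching the claimed $R$. \textbf{The main obstacle} is the per-swap estimate in step~(b): one must simultaneously maintain the nonnegativity invariant across swaps (which is what pins $\alpha^\star$ as the Gibbs-dominant vertex and yields the exponential suppression $e^{-k_\alpha\beta_j^{\mathrm{cur}}}$ of off-facet weights) and exploit the unit integer gap $k_\alpha\ge 1$ (from integrality of $a_j$) to cancel the potential blow-up $e^{tk_\alpha}$ arising when $y$ is moved by $-ta_j$. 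Without integrality, $k_\alpha$ could be arbitrarily close to zero and $\Delta$ would have to scale like $1/k_\alpha$, destroying the polynomial bit complexity -- this is precisely the obstruction exploited by Theorem~\ref{thm:lower_bound}.
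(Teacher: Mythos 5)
Your proposal is correct and follows essentially the same route as the paper: shift/project to $H$, take the normal cone at a maximizer $\alpha^\star$ and extract a conic basis via Farkas and Caratheodory (the paper's Lemma 4.2), truncate the coefficients at $\Delta$ one at a time while exploiting the integer gap $k_\alpha\ge 1$ for off-facet $\alpha$, and finish with the triangle inequality for the norm bound. The only cosmetic difference is that you factor $e^{-td}$ explicitly in the per-swap ratio, whereas the paper splits the estimate into two cases ($\cF_0$ and $\cF\setminus\cF_0$) directly on the function values; the choice of $\Delta$, the invariant of nonnegative coefficients, and the resulting $R=O(m^{3/2}M(L_p+\log|\cF|+\log\tfrac{m}{\eps}))$ all coincide.
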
 
\noindent 
We start by a preliminary lemma which is then used in the proof of Theorem~\ref{thm:bound}.
\begin{lemma}[Good basis]\label{lemma:polytope}
Let $P\subseteq \R^m$ be a polytope $P:=\{x\in \R^m: \inangle{a_i,x} \leq b_i \mbox{ for }i\in I\} \cap H$, where $I$ is a finite index set, $H\subseteq \R^m$ is a linear subspace of $\R^m$ and $a_i \in H$, $b_i \in \R$ for $i\in I$. Let $y\in H$ be any vector, then there exists a vertex $v\in P$ and a subset of the constraints $I_0\subseteq I$ of size $|I_0|\leq \dim(H)$, such that $\inangle{a_i,v}=b_i$ for $i\in I_0$ and there exist non-negative numbers $\{\beta_i\}_{i\in I_0}$ satisfying
$\sum_{i\in I_0} \beta_i a_i = y.$
\end{lemma}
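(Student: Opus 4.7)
The plan is to realize $v$ as the optimal vertex of the linear program $\max_{x\in P}\inangle{y,x}$ and then extract the pair $(I_0,\beta)$ by combining Farkas' lemma with Caratheodory's theorem, all carried out inside the subspace $H$. Since $P$ is a bounded polytope (and non-empty, else the statement is vacuous), the linear functional $\inangle{y,\cdot}$ attains its maximum over $P$ at some vertex $v$ of $P$; this $v$ will be the witness vertex. The key structural hypothesis that makes the argument clean is that $y$ as well as every $a_i$ lies in $H$, so the whole analysis can take place in $H$ and one never has to worry about components in $H^\perp$.

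Let $I(v):=\{i\in I:\inangle{a_i,v}=b_i\}$ be the set of constraints tight at $v$. Working inside $H$, the tangent cone of $P$ at $v$ is
$$T_v\;=\;\{d\in H:\inangle{a_i,d}\leq 0\text{ for all }i\in I(v)\}.$$
Optimality of $v$ for $\inangle{y,\cdot}$ translates into $\inangle{y,d}\leq 0$ for every $d\in T_v$, i.e., $y$ belongs to the polar of $T_v$ within $H$. By Farkas' lemma applied inside $H$ to the vectors $\{a_i\}_{i\in I(v)}\subset H$, this polar is precisely the nonnegative conic hull of $\{a_i:i\in I(v)\}$, so
$$y\;=\;\sum_{i\in I(v)}\gamma_i\,a_i\qquad\text{for some }\gamma_i\geq 0.$$

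Finally, since $y$ and all the $a_i$ lie in the subspace $H$ of dimension $\dim(H)$, Caratheodory's theorem for conic combinations allows me to trim this representation to at most $\dim(H)$ summands. That is, there exists $I_0\subseteq I(v)$ with $|I_0|\leq\dim(H)$ and coefficients $\beta_i\geq 0$ for $i\in I_0$ such that $y=\sum_{i\in I_0}\beta_i a_i$; by construction every $i\in I_0$ satisfies $\inangle{a_i,v}=b_i$, giving exactly the conclusion. I do not foresee a serious obstacle here: the only point requiring care is to invoke Farkas' lemma inside $H$ rather than in the ambient $\R^m$, and this is legitimate precisely because the hypothesis places the $a_i$ and $y$ in $H$, which lets us avoid any bookkeeping involving the orthogonal complement of $H$.
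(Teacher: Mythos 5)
Your proposal is correct and takes essentially the same route as the paper: both select $v$ as a vertex maximizing $\inangle{y,\cdot}$ over $P$, use Farkas' lemma to conclude that $y$ lies in the conic hull of the gradients of constraints active at $v$, and then invoke Caratheodory's theorem for cones inside $H$ to cut the representation down to at most $\dim(H)$ terms. The only stylistic difference is that you phrase the Farkas step directly as "polar of the tangent cone equals conic hull of active gradients," whereas the paper runs the contradiction form (if $y$ were outside the cone, a separating direction $z$ of arbitrarily small norm would yield a feasible point with strictly larger objective) and explicitly verifies via a small-ball argument that the cone defined by the active inequalities coincides with the cone of feasible directions at $v$ — a fact your write-up uses implicitly but which is standard for polyhedra.
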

We note that in the case when $H=\R^m$, Lemma~\ref{lemma:polytope} can be also geometrically interpreted as existence of a separating hyperplane for the convex set $P$.
In the proof of Theorem~\ref{thm:bound} we require this slightly more general variant where $H$ is an arbitrary linear subspace.

\begin{proof}
We start by picking $v \in P$ as an $x$ that maximizes $\inangle{x, y}$ over $x\in P$. Note that since $P$ is a polytope (and is compact) such a maximum exists and, we might assume that $v$ is a vertex of $P$.
Given such a $v\in P$ determine all inequalities which are tight at $v$, i.e.,
$I^\star = \{ i\in I: \inangle{a_i, v} = b_i\}.$
Note that $|I^\star|$ might be arbitrarily large, not even polynomially bounded.  We claim that there exist $\{\beta_i\}_{i\in I^\star} $ with $\beta_i\geq 0$ for all $i\in I^\star$, such that 
$y =\sum_{i\in I^\star} \beta_i a_i.$
\noindent 
Suppose it is not the case. Then from the Farkas lemma, there exists a vector $z\in \R^m$ such that:
$$\forall_{i\in I^\star } ~~\inangle{z, a_i} \leq 0 ~~~~~~\mbox{and}~~~~~~~\inangle{z, y} >0.$$
Note also that we may assume that $z\in H$, by projecting $z$ orthogonally onto $H$ if necessary. Further, the above is true also for $\delta \cdot z$ (in place of $z$) for an arbitrarily small $\delta>0$. In other words, we can take $z$ of arbitrarily small norm. Hence we obtain that the cone
$C=\{u\in H: ~\forall_{i\in I^\star} ~ \inangle{u, a_i} \leq 0\}$
contains arbitrarily short vectors $z$ with $\inangle{z,y}>0$. Note that since $I^\star$ is the collection of all inequalities tight at $v$, it follows that every point in $H$, which is sufficiently close to $v$ and satisfies the inequalities in $I^\star$ belongs itself to $P$. In other words, there exists a $\delta>0$ (might be exponentially small) such that
$(v+C) \cap B(v, \delta) \subseteq P,$
where $v+C = \{v+u:u\in C\}$ and $B(v, \delta)=\{x\in \R^m: \norm{x-v}\leq \delta\}$. Combining this with our previous observation regarding the cone $C$ it follows that there exists $z\in H$ such that $\mu:=v+z \in P$ and $\inangle{z,y}>0$ and hence $\inangle{\mu, y}>\inangle{v, y}. $
This contradicts our choice of $v\in P$;
$\inangle{\alpha, y} >  \inangle{{v}, y}.$
Knowing that $y$ belongs to the cone generated by $\{a_i\}_{i\in I^\star}$ we can apply Caratheodory's theorem to reduce the number of nonzero coefficients in the resulting conic combination. Indeed there exists a set $I_0\subseteq I^\star$, such that $|I_0|\leq \mathrm{dim}(H)$ and non-negative $\{\beta_i\}_{i\in I_0}$ (possibly different $\beta_i$s than obtained above) such that
$y=\sum_{i\in I_0} \beta_i a_i'.$
\end{proof}

\begin{proofof}{of Theorem \ref{thm:bound}}
Before we proceed with the argument let us first observe that one can assume without loss of generality that $0 \in \cF$. This follows from the ``shift invariance'' of our problem. Indeed if we consider $\cF$ and $\cF'=\cF+\gamma = \{\alpha + \gamma: \alpha \in \cF\}$, then the corresponding functions $h$ ad $h'$ satisfy
$h_{\theta}(y) = h'_{\theta+\gamma} (y)$
for every $y\in \R^m$. Hence, by shifting $\cF$ by $\gamma=-\alpha$ for some $\alpha \in \cF$ we obtain an equivalent instance of our problem with $0 \in \cF$. It follows in particular, that the affine subspace $H$ on which $P$ is full-dimensional is now a {\it linear subspace} of $\R^m$.

Fix $\theta\in P$ and let $y^\star$ be such that
$ h_{\theta}(y^\star)\leq g(\theta)+\frac{\eps}{2}.$
Note that we may assume that $y^\star \in H$, by projecting it orthogonally onto $H$ (which does not alter the value). Further, note that by denoting by $a_i'\in H$ (for $i\in I$) the orthogonal projection of $a_i$ onto $H$, the polytope  $P$ can be equivalently written  as
$P=\{x\in H: \inangle{a_i', x}\leq b_i\}.$
By applying Lemma~\ref{lemma:polytope} for $y^\star$ and $P$ we obtain a vertex $\alpha^\star \in \cF$ and a subset of constraints $I_0$ of size at most $\dim(H) \leq m$, tight at $\alpha^\star$ such that
$y^\star =\sum_{i\in I_0} \beta_i a_i',$
for some non-negative scalars $\{\beta_i\}_{i\in I_0}$. We prove that by modifying the coefficients in the above conic combination we can obtain a point 
$y'=\sum_{i\in I_0} \beta_i' a_i'$
(with $\beta_i' \geq 0$ for $i\in I_0$) such that the norm of $y'$ is small (polynomial in $L_p,m,\log d, M$ and $\log \frac{1}{\eps}$) and 
$$h_{\theta}(y') \leq h_{\theta}(y^\star) +\frac{\eps}{2}\leq g(\theta)+\eps.$$
Showing that will complete the argument. 
Let $\Delta >0$ be a certain number (to be specified later), polynomial in $L_p$ and $\log \frac{1}{\eps}$. We define $\beta_i':=\min(\Delta, \beta_i)$ and prove that the point $y'=\sum_{i\in I_0} \beta_i' a_i'$
satisfies the above claim.

To this end, we prove that by changing one coordinate $i_0$, from $\beta_{i_0}>\Delta$ to $\Delta$ we cause only a slight increase in the value of $h_{\theta}(y)$. In other words, by taking $y^\star$ as before and $y'=y^\star - (\beta_{i_0}-\Delta) a_i$ 
we want to show that
$$\textstyle \log\inparen{\sum_{\alpha\in \cF} p_\alpha e^{\inangle{\alpha, y'} - \inangle{\theta, y'}}}\leq\log\inparen{\sum_{\alpha\in \cF} p_\alpha e^{\inangle{\alpha, y^\star} - \inangle{\theta, y^\star}}}+ \frac{\eps}{2m}$$
Towards this, define
$\cF_0 = \{\alpha\in \cF: \inangle{\alpha, a_{i_0}}=b_{i_0}\}.$
Below we analyze separately the effect of changing $y$ to $y'$ on the terms $p_\alpha e^{\inangle{\alpha, y}}$ for $\alpha \in \cF_0$ and for $\alpha \in \cF \setminus \cF_0$.  

\parag{Case 1: $\cF \setminus \cF_0$.} 
Consider any $\alpha\in \cF \setminus \cF_0$. We have
\begin{align*}
\inangle{\alpha, y'} - \inangle{\alpha^\star, y'} &= \sum_{i\in I_0} \beta_i' \inangle{\alpha-\alpha^\star, a_i'}\\
 &= \sum_{i\in I_0} \beta_i' \inangle{\alpha-\alpha^\star, a_i}\\
&\leq \Delta \inangle{\alpha - \alpha^\star, a_{i_0}}\\
&\leq -\Delta.
\end{align*}
In the above, we used the fact that $a_i'$ is the projection of $a_i$ onto $H$, $\inangle{\alpha - \alpha^\star, a_i}\leq 0$ for every $i\in I_0$ and that $\inangle{\alpha - \alpha^\star, a_{i_0}}$ is a negative integer. This implies in particular, that
$$\frac{p_\alpha e^{\inangle{\alpha, y'} - \inangle{\theta, y'}}}{p_{\alpha^\star} e^{\inangle{\alpha^\star, y'} - \inangle{\theta, y'}}}\leq \frac{p_{\alpha}}{p_{\alpha^\star}}e^{-\Delta}$$
and hence:
\begin{align*}
\sum_{\alpha\in \cF} p_\alpha e^{\inangle{\alpha, y'} - \inangle{\theta, y'}}&=\sum_{\alpha\in \cF_0} p_\alpha e^{\inangle{\alpha, y'} - \inangle{\theta, y'}}+\sum_{\alpha\in \cF\setminus \cF_0} p_\alpha e^{\inangle{\alpha, y'} - \inangle{\theta, y'}}\\
&\leq \inparen{\sum_{\alpha\in \cF_0} p_\alpha e^{\inangle{\alpha, y'} - \inangle{\theta, y'}}}\inparen{1+|\cF \setminus \cF_0|\frac{\max_{\alpha \in \cF \setminus \cF_0}p_{\alpha}}{p_{\alpha^\star}}e^{-\Delta}}
\end{align*}
Note that for any $\alpha\in \cF$ we have $e^{-L_p} \leq p_\alpha \leq e^{L_p}$, hence we can pick $\Delta:=\log \abs{\cF} + 2 L_p + \log m + \log \frac{1}{\eps}$ to guarantee
$$|\cF \setminus \cF_0|\frac{\max_{\alpha \in \cF \setminus \cF_0}p_{\alpha}}{p_{\alpha^\star}}e^{-\Delta}\leq \frac{\eps}{2m}.$$
For such a choice of $\Delta$ we have:
$$h_{\theta}(y') \leq \log  \inparen{\sum_{\alpha\in \cF_0} p_\alpha e^{\inangle{\alpha, y'} - \inangle{\theta, y'}}} + \log\inparen{1+\frac{\eps}{2m}} \leq \log  \inparen{\sum_{\alpha\in \cF_0} p_\alpha e^{\inangle{\alpha, y'} - \inangle{\theta, y'}}}+\frac{\eps}{2m}.$$

\parag{Case 2: $\cF_0$.} 
Consider now $\alpha \in \cF_0$, we have
$$\inangle{\alpha, y'} - \inangle{\theta, y'} = \inangle{\alpha, y^\star} - \inangle{\theta, y^\star} - (\beta_i-\Delta) \inangle{\alpha-\theta, a_i}\leq \inangle{\alpha, y^\star} - \inangle{\theta, y^\star}$$
as $\inangle{\theta, a_i}\leq \inangle{\alpha, a_i}=b_i$ (because $\theta\in P$). Consequently, we obtain

\begin{align*}
\textstyle h_{\theta}(y')& \leq \log  \inparen{\sum_{\alpha\in \cF_0} p_\alpha e^{\inangle{\alpha, y'} - \inangle{\theta, y'}}}+\frac{\eps}{2m}\\
 &\leq  \inparen{\sum_{\alpha\in \cF_0} p_\alpha e^{\inangle{\alpha, y^\star} - \textstyle \inangle{\theta, y^\star}}}+\frac{\eps}{2m}\\
 & \leq h_{\theta}(y^\star)+\frac{\eps}{2m}.
 \end{align*}
\noindent 
It remains to argue that after performing the above procedure, the norm of $y'$ is small. We have
$$\textstyle \norm{y'}=\norm{\sum_{i\in I_0}^m \beta_i' a_i'} \leq \sum_{i\in I_0} \beta_i' \norm{a_i'} \leq m \cdot \Delta \cdot (\sqrt{m}\cdot M)= m^{3/2} M \Delta.$$
In the above, we used the fact that since $a_i'$ is a projection of $a_i$ onto H (for any $i\in I_0$) we have $\norm{a_i'} \leq \norm{a_i} \leq \sqrt{m}\cdot M.$
\end{proofof}

\noindent
Below we present a useful corollary of Theorem~\ref{thm:bound} which is often convenient in applications. To state it, denote by $d\in \R_{\geq 0}$ the diameter (in the Euclidean norm) of the set $\cF$.
\begin{corollary}\label{cor:rep}
Under the assumptions of Theorem~\ref{thm:bound},  for every function $p: \cF \to \R_{>0}$, for every $\eps>0$ there exists a number $R>0$ which is polynomial in $m, \log d, M, L_p$ and $\log \frac{1}{\eps}$ such that 
$$\textstyle \forall_{\theta\in P}~ \exists_{y\in B(0,R)}~~ \norm{\frac{\sum_{\alpha \in \cF} p_\alpha e^{\inangle{\alpha,y}} \cdot \alpha}{\sum_{\alpha \in \cF} p_\alpha e^{\inangle{\alpha,y}}} - \theta} <\eps.$$
\end{corollary}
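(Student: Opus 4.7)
The plan is to recognize the quantity in the corollary as the gradient $\nabla_y h_\theta(y)$, and then convert the function-value guarantee of Theorem~\ref{thm:bound} into a gradient-norm guarantee using smoothness. A direct calculation yields
$$\nabla_y h_\theta(y) = \frac{\sum_{\alpha\in \cF} p_\alpha e^{\inangle{\alpha-\theta,y}} (\alpha-\theta)}{\sum_{\alpha\in\cF} p_\alpha e^{\inangle{\alpha-\theta,y}}} = \frac{\sum_{\alpha\in\cF} p_\alpha e^{\inangle{\alpha,y}}\cdot\alpha}{\sum_{\alpha\in\cF} p_\alpha e^{\inangle{\alpha,y}}} - \theta,$$
after cancelling the common factor $e^{-\inangle{\theta,y}}$. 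So the corollary is literally asking for a short $y$ at which $\|\nabla h_\theta(y)\| < \eps$.

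Next I would compute the Hessian. A second differentiation shows that $\nabla^2_y h_\theta(y)$ is the covariance matrix of the random vector $\alpha-\theta$ distributed according to $q^y_\alpha \propto p_\alpha e^{\inangle{\alpha-\theta,y}}$. Since a covariance is dominated by the second-moment matrix, for any unit vector $v$ one gets $v^\top \nabla^2 h_\theta(y) v \leq \max_\alpha \inangle{\alpha-\theta,v}^2 \leq d^2$, using that both $\alpha$ and $\theta$ lie in $P=\conv(\cF)$, whose diameter equals that of $\cF$. Hence $h_\theta$ is convex and $d^2$-smooth globally.

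Now I would invoke the standard ``descent lemma'' consequence of smoothness: for any $y$, taking a gradient step yields $h_\theta(y-\tfrac{1}{d^2}\nabla h_\theta(y))\leq h_\theta(y) - \tfrac{1}{2d^2}\|\nabla h_\theta(y)\|^2$. Since the left side is $\geq g(\theta)$, we obtain the deterministic inequality
$$\|\nabla h_\theta(y)\|^2 \leq 2 d^2 \,\bigl(h_\theta(y)-g(\theta)\bigr),$$
which holds even when the infimum $g(\theta)$ is not attained.

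Finally I apply Theorem~\ref{thm:bound} with accuracy $\eps':=\eps^2/(2d^2)$: this produces $y\in B(0,R')$, with $R'=O\bigl(m^{3/2} M(L_p + \log|\cF| + \log(m/\eps'))\bigr)$, satisfying $h_\theta(y)\leq g(\theta)+\eps'$. Plugging into the displayed inequality gives $\|\nabla h_\theta(y)\|\leq \eps$, which is precisely the desired bound. Since $\log(1/\eps')=O(\log d+\log(1/\eps))$ and $\log|\cF|\leq O(m\log d)$ (vertices of $P$ lie in the integer box of side $\leq d$, but in any case $\log|\cF|$ is already allowed in Theorem~\ref{thm:bound}), the resulting radius $R=R'$ is polynomial in $m,\log d,M,L_p,\log(1/\eps)$. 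The only subtlety is the smoothness estimate itself, which is routine but essential: without the polynomial bound $\|\nabla^2 h_\theta\|_{\mathrm{op}}\leq d^2$, the reduction from value to gradient would lose too much and break the polynomial dependence on $\log(1/\eps)$.
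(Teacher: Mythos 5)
Your proposal is correct and follows essentially the same route as the paper: recognize the gradient formula, establish smoothness of $y\mapsto h_\theta(y)$, and use the descent lemma to translate the $\eps$-suboptimality from Theorem~\ref{thm:bound} into a gradient-norm bound. Two small points where yours is arguably cleaner: (i) the paper bounds the Hessian \emph{entries} by $2d^2$ and declares $L=2d^2$-smoothness, which is not quite right (an entrywise bound of $c$ only gives operator norm $\leq mc$), whereas your covariance interpretation gives the operator-norm bound $\leq d^2$ directly; (ii) the paper begins by restricting to $\theta$ in the relative interior and introducing the minimizer $y^\star=\argmin h_\theta$, a reduction it never actually uses, while you correctly observe that the inequality $\|\nabla h_\theta(y)\|^2\leq 2d^2\bigl(h_\theta(y)-g(\theta)\bigr)$ holds for all $\theta\in P$ regardless of whether the infimum is attained, so the interior restriction can be dropped. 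Neither difference changes the conclusion — both choices of smoothness constant feed into $\log(1/\eps')$ and stay polynomial — but your version is more rigorous on both counts.
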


\begin{proofof}{of Corollary~\ref{cor:rep}}
It is enough to establish it for $\theta \in \interior(P)$ (relative interior is meant here, i.e., interior of $P$ when restricted to $H$). To this end, consider $y^\star$ to be 
$y^\star = \argmin_{y\in \R^m} h_{\theta}(y) = g(\theta).$
It is not hard to prove that such a $y^\star$ exists, i.e., the minimum is attained, for $\theta$ in the relative interior of $P$. Consider now the gradient of $h$ with $\theta$ fixed:
\begin{align}\label{eq:grad}
\textstyle \nabla_{y} h_{\theta}(y) = \frac{\sum_{\alpha \in \cF} p_\alpha e^{\inangle{\alpha,y}} \cdot \alpha}{\sum_{\alpha \in \cF} p_\alpha e^{\inangle{\alpha,y}}} - \theta.
\end{align}
To conclude the corollary from Theorem~\ref{thm:bound} one has to prove that if the value at a point is close to optimal, then the gradient is short. Towards this, we show that $y \mapsto h_{\theta}(y)$ is $L-$smooth (for some polynomially bounded $L$), i.e.,
$\norm{\nabla_y h_{\theta}(y_1) - \nabla_y h_{\theta}(y_2)} \leq L \norm{y_1 - y_2}.$
This can be deduced from the fact that the Hessian matrix of $h$, $\nabla^2_y h_{\theta}(y)$ has polynomially bounded entries (and the bound depends neither on $y$ nor on $\theta$). Indeed, under the notation that $\alpha' = \alpha - \theta$ for $\alpha \in \cF$ we obtain
$$ \textstyle \inparen{\nabla^2_y h_{\theta}(y)}_{i,j} = \frac{\sum_{\alpha \in \cF} p_\alpha e^{\inangle{\alpha,y}}\alpha'_i \alpha'_j }{\sum_{\alpha \in \cF} p_\alpha e^{\inangle{\alpha,y}}} - \frac{\inparen{\sum_{\alpha \in \cF} p_\alpha e^{\inangle{\alpha,y}}\alpha'_i} \inparen{\sum_{\alpha \in \cF} p_\alpha e^{\inangle{\alpha,y}}\alpha'_j}}{\inparen{\sum_{\alpha \in \cF} p_\alpha e^{\inangle{\alpha,y}}}^2},$$
hence it follows
$$\textstyle \abs{\inparen{\nabla^2_y h_{\theta}(y)}_{i,j}} \leq 2\max_{\alpha \in \cF} \norm{\alpha - \theta}^2\leq 2 d^2.$$
Hence the function $y \mapsto h_{\theta}(y)$ is $L:=2d^2-$smooth.  Now, it is well known that for a convex, $L-$smooth function, we have (see e.g.~\cite{Nesterov14}):
$$h\inparen{\theta, y+\frac{1}{2L} v}\leq h_{\theta}(y)-\frac{1}{4L}\norm{v}^2,$$
where $v=\nabla_y h_{\theta}(y)$. Hence, if $y$ is as in Theorem~\ref{thm:bound} then we obtain
$\norm{\nabla_y h_{\theta}(y)}^2\leq 4L \eps$
and consequently
$$\norm{\nabla_y h_{\theta}(y)}\leq 4d \eps^{1/2}.$$
The corollary follows by combining the above obtained bound with~\eqref{eq:grad}.
\end{proofof}

\section{Stability of max-entropy distributions}\label{sec:stability}
The purpose of this section is to first restate Theorem~\ref{thm:stability} in a more general (weighted) form and then provide a proof of this generalized theorem.
We let $\cF \subseteq \Z^m$ to be any finite family of integer vectors and fix any function $p:\cF \to \R_{>0}$. As usually, we denote the convex hull of $\cF$ by $P$.
For brevity, we introduce the following handy notation:
\begin{enumerate}
\item Given $\theta\in P$ we denote by $q^\theta$ the optimal solution of the max-entropy convex program~\eqref{eq:gen_max_entropy}.
\item Given $y\in \R^m$ let $q^y$ be the distribution over $\cF$ defined as $$q_\alpha^y \propto p_{\alpha} e^{\inangle{y,\alpha}}~~~~~\mbox{ for }\alpha \in \cF.$$
Moreover, we let $\theta^y:=\sum_{\alpha \in \cF} q_\alpha^y \alpha.$
\end{enumerate}
Note in particular that if the dual program~\eqref{eq:h} has an optimal solution $y$ then the KKT conditions imply that $q^y = q^\theta$ and it follows from feasibility that $\theta^y = \theta$.
We are thus ready to state the generalized version of Theorem~\ref{thm:stability}.

\begin{theorem}[\bf Polynomial stability of max-entropy distributions]\label{thm:stability_gen}
Let $\cF$ be any finite subset of $\Z^m$
 and let $d\in \R_{\geq 0}$ be its diameter (in the $\ell_2$-norm)
and let $M$ be the unary facet complexity of $\conv(\cF)$. Then, for every function $p: \cF \to \R_{>0}$ and for every $\eps>0$ there exists a number $R=O(m^{3/2}\cdot M \cdot (L_p+\log \abs{\cF} + \log \frac{dm}{\eps}))$ such that  for every $\theta_1, \theta_2 \in P$, if $\norm{\theta_1 - \theta_2}_1 \leq \eps$ then 
$$\norm{q^{\theta_1} - q^{\theta_2}}_1\leq \sqrt{R} \sqrt{\eps}.$$
\end{theorem}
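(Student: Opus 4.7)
The plan is to combine Theorem~\ref{thm:bound} with a key duality identity that relates the KL divergence between $q^{\theta}$ and $q^y$ to the dual suboptimality gap, then convert to an $\ell_1$ bound via Pinsker's inequality. First I would invoke Theorem~\ref{thm:bound} with precision $\eps$ to obtain, for $i=1,2$, vectors $y_i\in \R^m$ with $\norm{y_i}_2\leq R$ and $h_{\theta_i}(y_i)\leq g(\theta_i)+\eps$. A short calculation using the KKT-type form $q^y_\alpha \propto p_\alpha e^{\inangle{\alpha,y}}$ together with feasibility $\sum_\alpha q^\theta_\alpha \alpha=\theta$ yields the identity
\[
KL(q^\theta,q^y) \;=\; h_\theta(y)-g(\theta),
\]
so Pinsker's inequality gives $\norm{q^{\theta_i}-q^{y_i}}_1\leq \sqrt{2\eps}$.

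Next I would transfer between the two dual problems using the elementary relation $h_{\theta_1}(y)-h_{\theta_2}(y)=\inangle{\theta_2-\theta_1,y}$, valid for every $y$. Since $\norm{\theta_1-\theta_2}_1\leq \eps$ and $\norm{y_i}_\infty\leq \norm{y_i}_2\leq R$, Hölder's inequality gives $|h_{\theta_1}(y)-h_{\theta_2}(y)|\leq R\eps$ for $y\in B(0,R)$. Substituting $y=y_2$ and using the definition of $g$,
\[
g(\theta_1)\;\leq\; h_{\theta_1}(y_2)\;\leq\; h_{\theta_2}(y_2)+R\eps\;\leq\; g(\theta_2)+(R+1)\eps,
\]
and swapping the roles of $\theta_1,\theta_2$ gives the same bound in the opposite direction, hence $|g(\theta_1)-g(\theta_2)|\leq (R+1)\eps$. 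Combining these two bounds yields
\[
h_{\theta_1}(y_2)-g(\theta_1)\;\leq\; \big(g(\theta_2)-g(\theta_1)\big)+(R+1)\eps\;\leq\; 2(R+1)\eps\;=\;O(R\eps).
\]

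Finally, applying the KL-duality identity to the pair $(\theta_1,y_2)$ gives $KL(q^{\theta_1},q^{y_2})=O(R\eps)$, so Pinsker yields $\norm{q^{\theta_1}-q^{y_2}}_1=O(\sqrt{R\eps})$. The triangle inequality then gives
\[
\norm{q^{\theta_1}-q^{\theta_2}}_1\;\leq\; \norm{q^{\theta_1}-q^{y_2}}_1+\norm{q^{y_2}-q^{\theta_2}}_1\;=\;O\bigl(\sqrt{R\eps}\bigr),
\]
which matches the claimed bound $\sqrt{R}\sqrt{\eps}$ up to an absolute constant. The main conceptual step is the ``dual transfer'' in the middle paragraph: it is precisely here that polynomial boundedness of $\norm{y_i}_2$, guaranteed by Theorem~\ref{thm:bound}, is essential, since the linear perturbation $\inangle{\theta_2-\theta_1,y}$ in the dual objective would otherwise be uncontrolled for $\theta_1,\theta_2$ near the boundary of $P$, and no nontrivial stability would follow. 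No additional smoothness/strong-convexity argument is needed once $\norm{y_i}$ is known to be polynomial, which is the chief payoff of the bit complexity theorem.
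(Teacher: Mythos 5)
Your proof is correct, and it takes a genuinely different route through the key ``transfer'' step than the paper does, so the comparison is worth spelling out. Both arguments begin identically: invoke Theorem~\ref{thm:bound} to produce $y_1,y_2\in B(0,R)$ with small dual suboptimality, use the identity $KL(q^{\theta},q^{y})=h_{\theta}(y)-g(\theta)$ (Lemma~\ref{lemma:equality}) together with Pinsker's inequality to control $\norm{q^{\theta_i}-q^{y_i}}_1$, and then reduce the problem to bounding the cross-term gap $h_{\theta_1}(y_2)-g(\theta_1)$. The divergence is in how that cross-term gap is bounded. The paper works in $y$-space: it uses the first-order convexity inequality $h_{\theta_1}(y_2)\leq h_{\theta_1}(y_1)+\inangle{\nabla_y h_{\theta_1}(y_2),\,y_2-y_1}$, then bounds $\norm{\nabla_y h_{\theta_1}(y_2)}_2$ by decomposing it as $(\theta^{y_2}-\theta_2)+(\theta_2-\theta_1)$ and invoking the $2d^2$-smoothness of $h$ (Lemma~\ref{fact:basic}, Part 2); this forces an auxiliary accuracy parameter $\delta$ satisfying $4d\sqrt{\delta}=\eps$, which is then traded for $\eps$ at the end. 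You instead work in $\theta$-space: you exploit the exact affine identity $h_{\theta_1}(y)-h_{\theta_2}(y)=\inangle{\theta_2-\theta_1,\,y}$, bound it via H\"older by $R\eps$ using $\norm{\theta_1-\theta_2}_1\leq\eps$ and $\norm{y}_\infty\leq\norm{y}_2\leq R$, and deduce both $|g(\theta_1)-g(\theta_2)|\leq(R+1)\eps$ and the cross-term bound directly. Your route is cleaner: it needs no smoothness estimate, no gradient bound, and no intermediate accuracy parameter $\delta$; the role of the boundedness $\norm{y_i}\leq R$ (the whole point of Theorem~\ref{thm:bound}) is also more transparent, since it enters exactly once, through H\"older. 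The paper's approach requires the extra machinery of Lemma~\ref{fact:basic} but also establishes the useful side fact $\norm{\theta^{y}-\theta}_2\leq 2d\sqrt{\eps}$, which the paper reuses elsewhere. Both yield $\norm{q^{\theta_1}-q^{\theta_2}}_1=O(\sqrt{R\eps})$, matching the stated bound up to an absolute constant, exactly as the paper's own proof does.
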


\noindent
Before we proceed to the proof we need to show a few simple properties of the max-entropy convex program and especially how do the primal and dual variables relate with each other. 
We begin by an elementary, yet important technical lemma. Below, whenever we refer to $h_{\theta}(y)$ or $g(\theta)$ we mean the dual objective, and its minimum respectively, as defined in~\eqref{eq:h}, where $\cF$ and $p$ will be clear from the context.

\begin{lemma}[KL-distance between feasible distributions of the max-entropy program]\label{lemma:equality}
Consider any finite family $\cF \subseteq \Z^m$ along with a positive function $p:\cF \to \R_{>0}$ then, for any $\theta \in P$, any $y\in \R^m$ and any distribution $q$ that is feasible for the max-entropy program~\eqref{eq:gen_max_entropy}, we have
$$KL(q,q^y) = h_{\theta}(y)-\sum_{\alpha \in \cF} q_\alpha \log \frac{p_\alpha}{q_\alpha}.$$
\end{lemma}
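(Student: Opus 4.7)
The plan is to establish the identity by a direct expansion of the definition of $KL$ using the explicit form of $q^y$, followed by substitution of the two feasibility constraints satisfied by $q$.

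Concretely, I would introduce the shorthand $Z(y) := \sum_{\alpha \in \cF} p_\alpha e^{\inangle{\alpha,y}}$, so that by definition $q_\alpha^y = p_\alpha e^{\inangle{\alpha, y}}/Z(y)$ and $h_{\theta}(y) = \log Z(y) - \inangle{\theta, y}$. Expanding the logarithm in the definition of the Kullback--Leibler divergence then yields
$$KL(q, q^y) \;=\; \sum_{\alpha \in \cF} q_\alpha \log \frac{q_\alpha}{q_\alpha^y} \;=\; \sum_{\alpha \in \cF} q_\alpha \log \frac{q_\alpha}{p_\alpha} \;+\; \log Z(y)\cdot \sum_{\alpha \in \cF} q_\alpha \;-\; \Bigl\langle \sum_{\alpha \in \cF} q_\alpha \alpha, \, y \Bigr\rangle.$$
Now I would invoke the two feasibility conditions on $q$: the normalization $\sum_{\alpha} q_\alpha = 1$ collapses the coefficient in front of $\log Z(y)$, while the marginal identity $\sum_\alpha q_\alpha \alpha = \theta$ turns the inner product into $\inangle{\theta, y}$. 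Assembling the resulting three terms and flipping the sign inside the first sum produces
$$KL(q, q^y) \;=\; -\sum_{\alpha \in \cF} q_\alpha \log \frac{p_\alpha}{q_\alpha} \;+\; \bigl(\log Z(y) - \inangle{\theta, y}\bigr) \;=\; h_{\theta}(y) - \sum_{\alpha \in \cF} q_\alpha \log \frac{p_\alpha}{q_\alpha},$$
which is the claimed identity.

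The proof is essentially a bookkeeping exercise, so there is no genuine obstacle; the only thing to be careful about is which feasibility constraint does which piece of work (normalization handles the $\log Z(y)$ factor, and the marginal constraint converts the $y$-inner product), and making sure the sign conventions in $KL$ and in $\sum q_\alpha \log(p_\alpha/q_\alpha)$ line up.
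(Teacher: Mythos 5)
Your proof is correct and is exactly the "direct calculation" the paper invokes: expand $\log(q_\alpha/q^y_\alpha)$ using $q^y_\alpha = p_\alpha e^{\inangle{\alpha,y}}/Z(y)$, then use $\sum_\alpha q_\alpha = 1$ and $\sum_\alpha q_\alpha\alpha = \theta$ to recover $h_\theta(y)$. The paper leaves this unspelled; you have simply written it out.
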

\begin{proof}
The proof follows by a direct calculation, it is enough to use the fact that $\sum_{\alpha \in \cF} q_\alpha \alpha = \theta.$
\end{proof}

\noindent
The next step is to show some basic properties of close-to-optimal solutions to the dual program, and more precisely how does a small suboptimality gap in the dual sense translate to the primal world. 

\begin{lemma}[Basic properties of close-to-optimal solutions to the dual program]\label{fact:basic}
Let $\cF$ be any finite subset of $\Z^m$ and let $d\in \R_{\geq 0}$ be its diameter, let $M$ be the unary facet complexity of $\conv(\cF)$. Then, for every function $p: \cF \to \R_{>0}$, for every $\theta\in P$ and for every $\eps>0$ 
if for $y\in \R^m$ we have $h_{\theta}(y)\leq  g(\theta)+\eps$ then 
\begin{enumerate}
\item $\norm{q^\theta -q^y}_1\leq\sqrt{2 \eps}.$
\item $\norm{\theta^y - \theta}_2 \leq 2d \sqrt{\eps}.$
\end{enumerate}

\end{lemma}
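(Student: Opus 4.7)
The plan is to derive both bounds from Lemma~\ref{lemma:equality} by invoking strong duality together with Pinsker's inequality.

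For part~(1), I would apply Lemma~\ref{lemma:equality} with $q=q^\theta$, the primal optimum of~\eqref{eq:gen_max_entropy}. By strong duality for the max-entropy convex program (a convex minimization with linear equality/inequality constraints and a strictly concave objective on the simplex; strong duality holds for $\theta$ in the relative interior of $P$ and extends to all of $P$ by continuity of the primal value in $\theta$), the primal optimum satisfies
\[
\sum_{\alpha \in \cF} q_\alpha^\theta \log \frac{p_\alpha}{q_\alpha^\theta} \;=\; g(\theta).
\]
Plugging this into Lemma~\ref{lemma:equality} yields
\[
KL(q^\theta, q^y) \;=\; h_\theta(y) \;-\; g(\theta) \;\leq\; \eps.
\]
Pinsker's inequality then gives $\|q^\theta - q^y\|_1 \leq \sqrt{2\, KL(q^\theta, q^y)} \leq \sqrt{2\eps}$, which is exactly claim~(1).

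For part~(2), I would use the fact that both $q^\theta$ and $q^y$ are probability distributions, so $\sum_\alpha (q_\alpha^y - q_\alpha^\theta) = 0$. Consequently, for any reference point $\alpha_0 \in \cF$,
\[
\theta^y - \theta \;=\; \sum_{\alpha \in \cF} (q_\alpha^y - q_\alpha^\theta)\, \alpha \;=\; \sum_{\alpha \in \cF} (q_\alpha^y - q_\alpha^\theta)\, (\alpha - \alpha_0).
\]
Fixing $\alpha_0 \in \cF$, every $\alpha \in \cF$ satisfies $\|\alpha - \alpha_0\| \leq d$ by the definition of the diameter, and therefore
\[
\|\theta^y - \theta\|_2 \;\leq\; \sum_{\alpha \in \cF} |q_\alpha^y - q_\alpha^\theta| \cdot \|\alpha - \alpha_0\|_2 \;\leq\; d\, \|q^y - q^\theta\|_1 \;\leq\; d\sqrt{2\eps} \;\leq\; 2d\sqrt{\eps},
\]
where in the penultimate step I used claim~(1) already established.

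The only nontrivial point is ensuring strong duality holds uniformly for all $\theta \in P$, including boundary points; I would handle this by first proving the identity $\max_q \sum_\alpha q_\alpha \log(p_\alpha/q_\alpha) = g(\theta)$ in the relative interior via standard Lagrangian duality (Slater's condition is satisfied there), then passing to the closure using continuity of the primal value and lower semicontinuity of $g$. Everything else is a clean two-line application of Pinsker and the triangle inequality, so no significant obstacle is expected. Note that the hypotheses on $M$ and $L_p$ stated in the lemma are not actually used here; they appear because the lemma is later invoked in a context that also requires Theorem~\ref{thm:bound}.
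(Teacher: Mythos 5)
Your Part~(1) is essentially the paper's own argument: apply Lemma~\ref{lemma:equality} at $q=q^\theta$, identify $\sum_\alpha q_\alpha^\theta \log(p_\alpha/q_\alpha^\theta)$ with $g(\theta)$ (strong duality), and invoke Pinsker. You are right that the strong-duality identity is silently used in the paper and deserves a word of justification; your treatment of it is careful and sound.

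Your Part~(2), however, is a genuinely different route from the paper's, and arguably simpler. The paper proves it independently of Part~(1): it identifies $\theta^y - \theta$ with $\nabla_y h_\theta(y)$, invokes $2d^2$-smoothness of $h_\theta$, and bounds the gradient norm via the standard one-step gradient-descent descent inequality $h_\theta(y') \leq h_\theta(y) - \norm{\nabla_y h_\theta(y)}^2/(4d^2)$ combined with $h_\theta(y') \geq g(\theta)$. You instead derive Part~(2) directly from the conclusion of Part~(1): recentering $\theta^y-\theta = \sum_\alpha (q^y_\alpha - q^\theta_\alpha)(\alpha-\alpha_0)$ using $\sum_\alpha(q^y_\alpha - q^\theta_\alpha)=0$ and applying the triangle inequality with $\norm{\alpha-\alpha_0}\le d$ gives $\norm{\theta^y-\theta}\le d\norm{q^y-q^\theta}_1\le d\sqrt{2\eps}\le 2d\sqrt{\eps}$. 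This is correct, avoids the smoothness machinery entirely, and in fact gives the marginally tighter constant $\sqrt{2}d$ in place of $2d$. The trade-off is structural: the paper's Part~(2) is self-contained and does not rely on strong duality, whereas yours inherits the strong-duality dependency through Part~(1). Since strong duality does hold here, that costs nothing, and your overall argument is both correct and cleaner. Your closing observation that the hypotheses on $M$ and $L_p$ are vacuous in this lemma is also accurate.
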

\begin{proof}
\paragraph{Part 1.} We start by applying Lemma~\ref{lemma:equality} for $q:=q^\theta$ to obtain
$$h_{\theta}(y) - g(\theta) = KL(q^\theta, q^y)$$
and thus by Pinsker inequality
$$\norm{q^\theta - q^y}_1 \leq \sqrt{2 KL(q^\theta, q^y)}.$$
\paragraph{Part 2.}
One can start by observing that 
$$\theta^y - \theta = \nabla_y h_{\theta}(y).$$
Further, since the function $y\mapsto h_{\theta}(y)$ is $2d^2$-smooth, it is well known that one step of the gradient descent algorithm started at $y$ (with step size $\frac{1}{4d^2}$) produces a point $y'$ with value
$$h_{\theta}(y') \leq h_{\theta}(y) - \frac{\norm{\nabla_y h_{\theta}(y)}_2^2}{4d^2},$$
and consequently $\norm{\nabla_y h_{\theta}(y)}_2\leq  2d \sqrt{\eps}$.
\end{proof}
\noindent
We are now ready to prove the theorem.

\begin{proofof}{ of Theorem~\ref{thm:stability_gen}}
Using Theorem~\ref{thm:bound} we can find an $R=O(m^{3/2}\cdot M \cdot (L_p+\log \abs{\cF} + \log \frac{m}{\delta}))$ and $y_1, y_2 \in B(0,R)$ such that
$$h_{\theta_1}(y_1) - g(\theta_1) \leq \delta ~~~~~ \mbox{and}~~~~~~h_{\theta_2}(y_2) - g(\theta_2)\leq \delta.$$
The value of $\delta>0$ will be chosen later (as a function of $\eps>0$). We prove that $h_{\theta_1}(y_2) - g(\theta_1)$ is small and thus by Fact~\ref{fact:basic} we deduce that $q^{y_2}$ and $q^{\theta_1}$ are close, thus also $q^{\theta_1}$ and $q^{\theta_2}$ are close. The details follow.

Since the function $y\mapsto h_{\theta_1}(y)$ is convex, the first order convexity condition yields

\begin{equation}
h_{\theta_1}(y_1) \geq h_{\theta_1}(y_2) + \inangle{\nabla_y h_{\theta_1} (y_2), y_1-y_2}
\end{equation}
and thus
\begin{align*}
h_{\theta_1}(y_2) &\leq h_{\theta_1} (y_1)+ \inangle{\nabla_y h_{\theta_1}(y_2), y_2-y_1}\\
&\leq h_{\theta_1}(y_1)+\norm{\nabla_yh (\theta_1, y_2)}_2\cdot \norm{y_2 - y_1}_2\\
& \leq g(\theta_1)+\delta+2R\cdot \norm{\nabla_y h_{\theta_1}(y_2)}_2
\end{align*}
By a simple calculation:
\begin{align*}
 \nabla_y h_{\theta_1}(y_2)&=\theta^{y_2} - \theta_1\\
 &=(\theta^{y_2} - \theta_2)+(\theta_2 - \theta_1)
\end{align*}
and further, by taking the norm and using Lemma~\ref{fact:basic}

$$ \norm{\nabla_yh (\theta_1, y_2)}_2 \leq \norm{\theta^{y_2} - \theta_2}_2 + \norm{\theta_2 - \theta_1}_2 \leq 4d\sqrt{\delta}+\eps.$$
\noindent
Thus consequently we obtain:
$$h_{\theta_1}(y_2) \leq g(\theta_1) + 10Rd\sqrt{\eps}.$$
Now, by Lemma~\ref{fact:basic} it follows that
$$\norm{q^{\theta_1} - q^{y_2}}_1 \leq O(\sqrt{R}) \sqrt{4d\sqrt{\delta} + \eps}$$
And further, again by Lemma~\ref{fact:basic} we obtain
$$\norm{q^{\theta_2} - q^{y_2}}_1 \leq \sqrt{2\delta}$$
\noindent
and hence by the triangle inequality
$$\norm{q^{\theta_1}-q^{\theta_2}}_1 \leq O(\sqrt{R}) \sqrt{4d\sqrt{\delta} + \eps}+  \sqrt{2\delta}$$
Now, by taking $\delta>0$ such that $4d\sqrt{\delta} = \eps$ we obtain that 
$$\norm{q^{\theta_1}-q^{\theta_2}}_1 \leq O(\sqrt{R\eps}).$$
\end{proofof}

\section{Proof of the bit complexity lower bound}\label{sec:proof_lower}
We start by proving a technical fact which will be useful in establishing the large bit complexity example.

\begin{fact}\label{fact:short_y}
Let $v_1, v_2, \ldots, v_N \subseteq \R^m$ be a set of vectors and denote $\delta:=\dist(0,\conv(v_1, v_2, \ldots, v_N))$. Assume that $\delta>0$ and consider the optimal value of the following optimization problem:
$$\tau = \min \inbraces{\norm{y}:y\in \R^m,  \inangle{y,v_i}\leq -1 \mbox{ for all }i=1,2, \ldots, N},$$
then $\tau\geq \frac{1}{\delta}$.
\end{fact}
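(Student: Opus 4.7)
The plan is to exhibit a single direction in which any feasible $y$ must have large inner product, thereby forcing $\|y\|$ to be large via Cauchy–Schwarz. The natural candidate direction is the point of $\conv(v_1, \ldots, v_N)$ closest to the origin; by assumption its norm is exactly $\delta > 0$.

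First I would observe that the linear constraints $\langle y, v_i \rangle \leq -1$ extend by convexity to $\langle y, v \rangle \leq -1$ for every $v \in \conv(v_1, \ldots, v_N)$, since $\langle y, \sum_i \lambda_i v_i \rangle = \sum_i \lambda_i \langle y, v_i \rangle \leq -\sum_i \lambda_i = -1$ whenever $\lambda \geq 0$ sums to $1$. Then I would let $v^\star \in \conv(v_1, \ldots, v_N)$ realize the minimum distance to the origin, so $\|v^\star\| = \delta$. Applying the previous observation to $v^\star$ gives $\langle y, v^\star \rangle \leq -1$, and in particular $|\langle y, v^\star \rangle| \geq 1$.

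Finally, by the Cauchy–Schwarz inequality, $1 \leq |\langle y, v^\star \rangle| \leq \|y\| \cdot \|v^\star\| = \|y\| \cdot \delta$, which rearranges to $\|y\| \geq 1/\delta$. Since $y$ was an arbitrary feasible point, this yields $\tau \geq 1/\delta$, as required.

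I do not anticipate any serious obstacle: the key step is recognizing that one should plug in the closest point of the convex hull rather than any single $v_i$, which is what makes the bound tight and matches the appearance of $\dist(0, \conv(\cdot))$ rather than $\min_i \|v_i\|$ in the statement.
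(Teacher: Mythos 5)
Your proof is correct, and it takes a genuinely simpler route than the paper's. The paper formulates the problem as a convex quadratic program (minimizing $\|y\|^2$), writes down its Lagrangian dual, and exhibits a feasible dual solution $\lambda = \frac{2}{\delta^2}\mu$ (where $\mu$ gives the convex combination achieving the closest point) whose dual objective value is $\frac{1}{\delta^2}$; weak duality then yields the bound. You instead observe directly that the constraints propagate to the whole convex hull by linearity, plug in the closest point $v^\star$, and apply Cauchy--Schwarz. Both arguments pivot on the same geometric object (the point of $\conv(v_1,\ldots,v_N)$ nearest the origin), but your version replaces the QP-duality machinery with a one-line inequality, which is cleaner and avoids any reliance on the form of the Lagrangian dual. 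The paper's duality approach makes the tightness of the bound slightly more visible (the remark following the fact in the paper notes that $\tau = 1/\delta$ exactly), but for the lower bound itself your argument is the more economical one.
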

\begin{proof}
We formulate the problem as a convex, quadratic program with linear constraints:
\begin{equation}
\begin{aligned}
	\min ~~ &\norm{y}^2,&\\
		\st ~ & \inangle{y,\alpha}\leq -1 ~~~\mbox{for all }\alpha \in \cF'.
\end{aligned}
\label{eq:short_y}	
\end{equation}
To derive a lower bound on the optimal value of~\eqref{eq:short_y} we consider the dual program:
\begin{equation}
\begin{aligned}
	\max ~~ &\sum_{i=1}^N \lambda_i - \frac{1}{4} \norm{\sum_{i=1}^N \lambda_i v_i}^2,&\\
		\st ~ & \lambda_i\geq 0 ~~~\mbox{for all }i=1,2, \ldots, N.
\end{aligned}
\label{eq:dual_short_y}	
\end{equation}
From weak duality we know that the optimal value of~\eqref{eq:dual_short_y} is a lower bound to~\eqref{eq:short_y} thus we just need to provide a feasible solution to the dual program. 
To this end, let $v$ be the shortest vector in the convex hull of $v_1, v_2, \ldots, v_N$, i.e., $v\in  \conv(v_1, v_2, \ldots, v_N)$  and $\norm{v}=\delta$. $v$ can be written as
$$v= \sum_{i=1}^N \mu_i v_i$$
for some $\mu\geq 0$ with $\sum_{i=1}^N \mu_i=1$. Consider now $\lambda:= \frac{2}{\delta^2} \mu \in \R_{\geq 0}^N$.  The dual objective value for $\lambda$ is
$$\sum_{i=1}^N \lambda_i - \frac{1}{4} \norm{\sum_{i=1}^N \lambda_i v_i}^2 = \frac{2}{\delta^2}  - \frac{1}{4}\cdot  \frac{4}{\delta^4} \norm{\sum_{i=1}^N \mu_i v_i}^2 = \frac{1}{\delta^2}.$$
This provides us with a lower bound of $\frac{1}{\delta^2}$ on the optimal value of~\eqref{eq:short_y} and thus a lower bound of $\frac{1}{\delta}$ on the optimization problem as in the statement of the Fact.
\end{proof}

\begin{remark}
It is not hard to prove that in Fact~\ref{fact:short_y} the value $\frac{1}{\tau}$ is not only a lower bound but is actually equal to the optimal value of the considered optimization problem. This can be established by plugging in an appropriate scaling of the shortest vector in $\conv(v_1, v_2, \ldots, v_N)$ for $y$.
\end{remark}

\begin{proofof}{of Theorem~\ref{thm:lower_bound}}
Our construction of $\cF$ is based on existence of ``flat'' 0-1 polytopes, as established by~\cite{AV97}. There exist $m+1$ affinely independent points $\alpha_0, \alpha_1, \alpha_2, \ldots, \alpha_m \in \{0,1\}^m$ such that if we let $H=\alpha_1+\lspan \{ \alpha_2-\alpha_1, \ldots, \alpha_m-\alpha_1\}$ (the $(m-1)-$dimensional affine subspace containing all points $\alpha_1, \ldots, \alpha_m$) then
$$\dist(\alpha_0,H)=e^{- \Omega(m \log m)}.$$
Without loss of generality we assume that $\alpha_0 = 0$.  Let $y\in H$ be the projection of $\alpha_0=0$ onto $H$, i.e., a point such that
$$\inangle{y, \alpha_i -y } = 0 ~~~~~~~~~\mbox{for every }i=1,2, \ldots, m.$$
Consider the lattice 
$$L=\alpha_1 + \sum_{i=2}^m (\alpha_i - \alpha_1)\cdot \Z,$$
with the origin at $\alpha_1$, with basis $\{(\alpha_i - \alpha_1)\}_{2\leq i \leq m}$. Since the hyperplane $H$ is covered by disjoint copies of the fundamental parallelepiped 
$$F:=\inbraces{ \sum_{i=2}^m \beta_i (\alpha_i - \alpha_1): \beta_2, \ldots, \beta_m \in [0,1)},$$
 there is an integer translation of $F$ which contains the point $y$. More formally, there exists an integer vector $\gamma \in \Z^m$ such that
$$y \in \gamma + F \subseteq H.$$
Note now that by denoting $F'=\gamma+F$ we obtain
$$\diam(F') = \diam(F) \leq m^{3/2},$$
since every vertex of $F$ has integer coordinates in the range $[0,m]$. Let $\alpha \in \Z^m$ be now any of the $2^m$ vertices of $F'$. Since $y$ belongs to $F'$ we have 
$$\norm{\alpha} \leq \norm{y} + \diam(F') \leq O(1) +m^{3/2}.$$
Let $\cF'$ be a subset of $\Z^m$ consisting of all the vertices of $F'$ and let $\cF := \cF' \cup \{0\}$. Further, define $\theta := 0$. We prove that the conclusion of the Lemma holds under such a choice of $\cF$ and $\theta$. 

Towards this, we first note that affine hull of $\cF'$ is equal to $H$, as $F' \subseteq H$ and its vertices clearly span $H$. Moreover the point $y$, which is the projection of $0$ onto $H$ belongs to the convex hull of $\cF'$. Let $\delta>0$ be the distance between $0$ and $H$ and let $a\in \R^m$ (with $\norm{a}=1$) be the normal vector of $H$, in other words
$$H=\{x\in \R^m: a^\top x = \delta\}.$$
Note that the gradient of $h_{\theta}(y)$ with respect to $y$ is given by:
$$\nabla_y h_{\theta}(y) = \frac{\sum_{\alpha \in \cF} \alpha \cdot  e^{\inangle{y,\alpha}}}{\sum_{\alpha \in \cF} e^{\inangle{y,\alpha}}}.$$
Since $h$ is $L-$smooth for some $L=\poly(m)$ (see the proof of Corollary~\ref{cor:rep}), we know that points with a large-magnitude gradient cannot be close to optimal. Quantitatively, we have
$$\abs{g(\theta) - h_{\theta}(y)} \geq \frac{\norm{\nabla_y h_{\theta}(y)}^2}{L}.$$
Thus to prove that $\abs{g(\theta) - h_{\theta}(y)} \geq \eps$ (for some $\eps>0$) it is enough to prove that $\norm{\nabla_y h_{\theta}(y)}\geq \sqrt{\eps L}$. 
We pick $\eps$ to be $\frac{\delta^2}{e^4 \cdot L}$, then $\eps = e^{-O(m \log m)}$.   Moreover, the condition $\abs{g(\theta) - h_{\theta}(y)}<\eps$ implies that $\norm{\nabla_y h_{\theta}(y)}< \frac{\delta}{e^2}$. We prove that the latter is possible only when $\norm{y} \geq e^{\Omega(m \log m)}$.

Indeed, assume that $\norm{\nabla_y h_{\theta}(y)}< \frac{\delta}{e^2}$. By the Cauchy-Schwarz inequality we have
$$\inangle{a, \nabla_y h_{\theta}(y)}\leq \norm{\nabla_y h_{\theta}(y)} \cdot \norm{a} = \norm{\nabla_y h_{\theta}(y)} $$
and moreover

$$\inangle{a, \nabla_y h_{\theta}(y)} = \frac{0+\sum_{\alpha \in \cF'} \inangle{a, \alpha}  e^{\inangle{y,\alpha}}}{\sum_{\alpha \in \cF} e^{\inangle{y,\alpha}}}=\delta\cdot  \frac{\sum_{\alpha \in \cF'} e^{\inangle{y,\alpha}}}{1+\sum_{\alpha \in \cF'} e^{\inangle{y,\alpha}}}.$$
It follows that
$$\frac{\sum_{\alpha \in \cF'} e^{\inangle{y,\alpha}}}{1+\sum_{\alpha \in \cF'} e^{\inangle{y,\alpha}}}< \frac{1}{e^2},$$
and consequently $\sum_{\alpha \in \cF'} e^{\inangle{y,\alpha}}< \frac{1}{e}$, which implies in particular, that
\begin{equation}\label{eq:nec_y}
\forall_{\alpha \in \cF'} ~~~~ \inangle{y,\alpha} \leq -1.
\end{equation}
The question we would like to answer is: what is the shortest $y\in \R^m$ which satisfies condition~\eqref{eq:nec_y}? This will give us a lower bound on $\norm{y}$ satisfying $\abs{g(\theta) - h_{\theta}(y)}<\eps$. To answer this question, we apply Fact~\ref{fact:short_y} and conclude that every such $y$ has length at least $\frac{1}{\delta}$. As $\delta = e^{-\Omega(m \log m)}$ we conclude that the optimal solution $y^\star$ to~\eqref{eq:short_y} satisfies $\norm{y^\star}=e^{\Omega(m \log m)}$ and the Lemma follows by contraposition.
\end{proofof}

\section{Computability of max-entropy distributions}\label{sec:proof_comp}
This section is devoted to proving computability of maximum entropy distributions, assuming that an appropriate counting oracle is provided. 
The main component in the argument is the bound on the magnitude of almost optimal dual solutions established in Theorem~\ref{thm:bound}.

\begin{theorem}\label{thm:comp_entro}
Let $\cF$ be any finite subset of $\Z^m$ and let $d\in \R_{\geq 0}$ be its diameter, let $M$ be the unary facet complexity of $\conv(\cF)$.  Then, there exists an algorithm such that given a probability distribution $p$ on $\cF$ (via an evaluation oracle for $g_p$), $\theta\in P$ and an $\eps>0$, computes  a vector $y \in \R^m$ with $\norm{y} \leq \poly\inparen{m,M, \log d,L_p,\log \frac{1}{\eps}}$ such that
$$\norm{q^y - q^\star}_1 <\eps,$$
where $q^\star$ is the optimal solution to~\eqref{eq:gen_max_entropy} and $q^y$ is a distribution over $\cF$ defined as $q^y_\alpha = \frac{p_\alpha e^{\inangle{\alpha, y}}}{\sum_{\cB \in \cF} p_\beta e^{\inangle{\beta, y}}},$ for $\alpha \in \cF$. The running time of the algorithm is polynomial in $m,M, d, L_p, \log \frac{1}{\eps}$. 
\end{theorem}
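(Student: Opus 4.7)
The plan is to combine the structural bit-complexity bound of Theorem~\ref{thm:bound} with a standard polynomial-time convex optimization routine, and then convert the resulting dual suboptimality to primal $\ell_1$-closeness via Lemma~\ref{fact:basic}. The overall reduction is: an approximate minimizer of $h_\theta$ yields a distribution close to $q^\star$.

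First, I would set an auxiliary accuracy $\eps':=\eps^2/8$ and invoke Theorem~\ref{thm:bound} to obtain an explicit radius $R=\poly(m,M,\log d,L_p,\log 1/\eps)$ with the guarantee $\inf_{y\in B(0,R)} h_\theta(y)\leq g(\theta)+\eps'$. The problem then reduces to approximately minimizing the convex function $h_\theta$ on the explicit Euclidean ball $B(0,R)$, a setting where polynomial-time convex optimization methods apply.

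Second, I would run a standard polynomial-time convex-minimization procedure (for example, the ellipsoid method or a modern cutting-plane algorithm) on $h_\theta$ restricted to $B(0,R)$. As observed in the proof of Corollary~\ref{cor:rep}, $h_\theta$ is convex and $2d^2$-smooth, and $\nabla_y h_\theta(y)=\theta^y-\theta$. Both $h_\theta(y)$ and the ratios of oracle values required for $\theta^y$ are obtainable with a single type of call to the evaluation oracle for $g_p$: indeed, $h_\theta(y)=\log g_p(y)-\inangle{\theta,y}$, and the $i$-th coordinate of $\theta^y$ can be read off from $g_p$ evaluated at shifted arguments (or directly as a ratio of derivatives of $g_p$). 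Hence in $\poly(m,\log(Rd/\eps'))$ oracle calls the procedure returns a point $y\in B(0,R)$ with $h_\theta(y)\leq g(\theta)+2\eps'$. Applying the first part of Lemma~\ref{fact:basic} with accuracy $2\eps'$ then gives $\norm{q^y-q^\star}_1\leq\sqrt{4\eps'}\leq\eps$, as required, and the bound $\norm{y}\leq R$ is precisely the bit complexity estimate claimed.

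The main obstacle I expect is the careful bookkeeping of finite-precision arithmetic. Specifically: (i) the iterates of the cutting-plane method and the intermediate oracle outputs must be truncated to $\poly(m,M,L_p,\log d,\log 1/\eps)$ bits at every step, and one must verify that the accumulated round-off error does not destroy the $\eps'$ suboptimality guarantee of the convex-optimization subroutine; and (ii) one must argue that all oracle queries can be issued at arguments of polynomially bounded magnitude and that the responses (which involve exponentials $e^{\inangle{\alpha,y}}$ with $\norm{y}\leq R$) can be manipulated in polynomially many bits. Both points are standard but require some routine care, and together with the two ingredients above they yield the stated $\poly(m,M,d,L_p,\log 1/\eps)$ running time.
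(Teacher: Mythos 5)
Your proposal matches the paper's proof almost step for step: restrict the dual search to the ball $B(0,R)$ given by Theorem~\ref{thm:bound} (with $\eps' = \Theta(\eps^2)$), run the ellipsoid method on $y\mapsto h_\theta(y)$ over $B(0,R)$ using the $g_p$ oracle, and translate dual suboptimality into $\ell_1$ distance via the KL identity and Pinsker (equivalently, part~1 of Lemma~\ref{fact:basic}). The one point you elide --- and which is worth spelling out --- is how to compute $\nabla_y h_\theta$ from an \emph{evaluation} oracle for $g_p$ alone (note also it is $g_p(e^y)$, not $g_p(y)$): the paper recovers each coordinate of the gradient by interpolating a degree-$O(d)$ univariate polynomial from $O(d)$ evaluations of $g_p$, which is precisely what makes the overall running time polynomial in $d$ rather than $\log d$, consistent with the $\poly(m,M,d,L_p,\log 1/\eps)$ bound you correctly state at the end.
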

\noindent 
In the above $g_p$ is a generalized counting function.
An oracle for $g_p$ is then simply defined as a procedure that given an $x>0$ outputs
$$g_p(x):=\sum_{\alpha \in \cF} p_\alpha x^\alpha,~~~~~~~\mbox{where}~~~x^\alpha:=\prod_{i=1}^m x_i^{\alpha_i}.$$
It was shown in~\cite{SinghV14} that efficient (polynomial time) computability of max-entropy distribution essentially implies the existence of polynomial time oracles for $g_p$. 
Hence, in principle, the existence of such oracles is assumed without loss of generality.

\begin{proof}
For convenience without loss of generality we might assume that $\cF \subseteq \N^m$ and in fact even $\cF \subseteq [0,2d]^m \cap \N^m$ where $d$ is the diameter of $\cF$.
This is because one can always shift the set $\cF$ (together with $\theta$) and this operation does not affect the problem nor its parameters.
To obtain a vector $y$, as required, we solve the dual program up to the desired precision (below we explain why is this enough). Recall that the dual program to~\eqref{eq:gen_max_entropy} is given by
\begin{equation}
\begin{aligned}
	\inf ~~& h(\theta, y)=\log\inparen{\sum_{\alpha \in \cF} p_\alpha e^{\inangle{\alpha-\theta, y}}},&\\
		\st ~~~ &y\in \R^m.
\end{aligned}
\label{eq:entropy_dual}	
\end{equation}
By a direct calculation one can show that for every feasible solution $q$ of the primal problem~\eqref{eq:gen_max_entropy}
$$KL(q,q^y) = h(\theta,y)-\sum_{\alpha \in \cF} q_\alpha \log \frac{p_\alpha}{q_\alpha},$$
where $KL(\cdot, \cdot)$ denotes the $KL$-divergence. In particular for $q:=q^\star$
$$h(\theta, y) - g(\theta) = KL(q^\star, q^y).$$
This means that in order to obtain a distribution $q^y$ being $\eps-$close in the KL-distance to the max-entropy distribution $q^\star$ it is enough to find an $\eps-$optimal solution to the dual program. 
Moreover, from Pinsker's inequality we have
$$\norm{q-q^\star}_1 \leq \sqrt{2\cdot KL(q^\star, q)}$$
hence it suffices to find a solution $y$ to the dual program which is $\delta:=\Theta(\eps^2)$-optimal, to guarantee that $\norm{q^\star - q^y}_1 < \eps$.

To find a $\delta$-optimal solution to the dual problem we apply the ellipsoid method. 
First of all we note that $h(\theta, y)$ is a convex function of $y$ (this follows from H\"older's inequality), which is the first requirement for the ellipsoid method to be applicable.
It follows from Theorem~\ref{thm:comp_entro} that in order to find a $\delta$-optimal solution to $\inf_{y\in \R^m} h(\theta,y)$ it is sufficient to solve
$$\min_{y\in B(0,R)} h(\theta,y),$$
where $R$ is a certain bound, polynomial in $m, M, \log d, L_p$ and $\log \frac{1}{\delta}$.
Now, following the treatment of the ellipsoid method in~\cite{BentalN12} (Theorem 8.2.1) it remains to address the following issues.
\begin{enumerate}
\item Construct a first order oracle for $y \mapsto h(\theta,y)$, i.e., an efficient way to evaluate values $h(\theta,y)$ and gradients $\nabla_y h(\theta,y)$ of this function. 
\item Provide an enclosing ball -- containing the domain and a ball contained in the domain.
\item Provide a bound on the gap between the maximum and minimum value of $h(\theta, \cdot)$ in $B(0,R)$. More precisely, defining $D:=\max_{y\in B(0,R)} h(\theta, y) - \min_{y\in B(0,R)} h(\theta,y)$ we would like $\log D$ to be polynomially bounded.
\item Provide a separation oracle for the domain $B(0,R)$.
\end{enumerate}
{\bf Point (1)} We first note that $h$ can be equivalently written as
$$h(\theta,y)=\log \inparen{\sum_{\alpha \in \cF} p_\alpha e^{\inangle{\alpha,y}}} - \inangle{\theta, y}.$$
\noindent 
Thus $h(\theta,y)=\log g_p(e^y) - \inangle{\theta,y}$, where $e^y=(e^{y_1}, e^{y_2}, \ldots, e^{y_m})$ and consequently $h(\theta,y)$ can be evaluated using just the evaluation oracle for $g_p$. For the case of gradients observe first that
$$\nabla_y h(\theta,y) = \frac{1}{g_p(e^y)}\sum_{\alpha \in \cF}\alpha  p_\alpha e^{\inangle{\alpha,y}} - \theta.$$
Since computing $g_p(e^y)$ is easy, it remains to deal with $\sum_{\alpha \in \cF}\alpha  p_\alpha e^{\inangle{\alpha,y}}$.
For this note that the $i$th coordinate of the above is 
$$\sum_{\alpha \in \cF}\alpha_i  p_\alpha e^{\inangle{\alpha,y}} = \frac{d}{dt} g_p(e^{y_1}, \ldots, e^{y_{i-1}},e^{y_i}+t, e^{y_{i+1}}, \ldots, e^{y_m}).$$
The right hand side above is a univariate polynomial $h(t)$ of degree at most $2d$ (since $\cF \subseteq [0,2d]$). 
To compute its derivative it is enough to learn all its coefficients (in fact it is enough to learn the coefficient of $t$ in $h(t)$).
Towards this, note that the evaluation oracle for $g_p$ implies an evaluation oracle for $t\mapsto h(t)$, and hence we can simply evaluate $h$ at $2d+1$ different points and recover its coefficients using polynomial interpolation.
The running time of such a procedure is polynomial in $d$ -- as required.

Note also that, importantly, to implement the first order oracle for $h(\theta, \cdot)$ the oracle $g_p$ is queried only on inputs of polynomial bit complexity, as for every $y\in B(0,R)$ the vector $e^y$ has polynomial bit complexity (since $R=\poly\inparen{m,\log\frac{1}{\delta}}$). Thus the running time of these procedures is polynomial in $m, \log \frac{1}{\delta}$ and $d$.

\noindent {\bf Point (2)} This is not an issue since the domain is already a ball.

\noindent {\bf Point (3)} Note that from the Cauchy-Schwarz inequality, for every $y\in B(0,R)$
$$\inangle{y,\alpha - \theta}\leq \norm{y} \norm{\alpha - \theta} \leq R \cdot d,$$
hence, we have
$$\max_{y\in B(0,R)}h(\theta,y) \leq \log \inparen{e^{L_p}\abs{\cF} \cdot e^{R\cdot d}} \leq L_p \cdot R \cdot d \cdot \log |\cF|.$$
Similarly, for the minimum
$$\min_{y\in B(0,R)}h(\theta,y) \geq - L_p\cdot R \cdot d \cdot \log |\cF|.$$
Clearly the logarithm of the gap: 
$$\log \inparen{\max_{y\in B(0,R)}h(\theta,y) -\min_{y\in B(0,R)}h(\theta,y)}\leq \log (2 L_p\cdot R\cdot d\cdot \log |\cF|)$$
 is polynomially bounded (even without taking the logarithm it is still true), as $\abs{\cF}\leq (2d)^m$.

\noindent {\bf Point (4)} This is clear - given a vector $y_0\in \R^m$, if $\norm{y} \leq R$ we just report $y_0$ to be in the domain and if $\norm{y_0}=R' >R$ then $\{y\in \R^m:\inangle{y,y_0}=\frac{R+R'}{2}\}$ is the required separating hyperplane.

\end{proof}
\section{Applications to TCS}\label{sec:applications}

We present several new and old application of our main result: Theorem~\ref{thm:bound} in TCS.
Since some of them rely on the fact that Newton polytopes of real stable polynomials have low unary facet complexity, below we briefly state the necessary background.

\subsection*{Preliminaries on real stability}\label{ssec:real_stable}
We first define the concept of real stable polynomials that appear in some applications of our results.
For a survey on real stable polynomials we refer the reader to~\cite{Wagner11}.
\begin{definition}
A polynomial $p\in \C[x_1, \ldots, x_m]$ is called real stable if all its coefficients are real numbers and the following condition holds
$$\forall_{z\in \C^m} ~~(\Im(z_i)>0 \mbox{ for all }i=1,2, \ldots, m) ~~\Rightarrow ~~ p(z) \neq 0.$$
\end{definition}
\noindent 
The following fact shows that polytopes arising from such polynomials have low facet complexity.
\begin{fact}[\cite{BC95, Branden07}]\label{fact:rs}
Let $p\in \R[x_1, \ldots, x_m]$ be a real stable polynomial with non-negative coefficients. Then, there exists a ``rank function'' $r: \{-1,0,1\}^m \to \Z$ is, such that the convex hull of $\cF \subseteq \N^m$ -- the support of $p$ can be described as:
$$\conv(\cF)=\{x\in \R^m: ~\forall_{c\in \{-1,0,1\}^m} ~ \inangle{x,c} \leq r(c)\}.$$
\end{fact}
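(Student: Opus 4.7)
The plan is to derive this fact by combining two classical results, one on the combinatorial structure of supports of real stable polynomials, the other on the polyhedral structure of such combinatorial objects.

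First, I would invoke Br\"and\'en's theorem: for any real stable polynomial $p \in \R[x_1,\ldots,x_m]$ with non-negative coefficients, the support $\cF = \supp(p) \subseteq \N^m$ forms a \emph{jump system}, meaning that for any $\alpha,\beta \in \cF$ and any ``step'' $s$ (a signed standard basis vector) that moves $\alpha$ towards $\beta$ in one coordinate, either $\alpha + s \in \cF$ or there is a further step $s'$ (also towards $\beta$) with $\alpha + s + s' \in \cF$. Br\"and\'en's proof proceeds by stability-preserving operations (polarisation and specialisation) to reduce to a Newton-like inequality on triples of coefficients of the polarised polynomial; this inequality, together with non-negativity, encodes precisely the two-step exchange axiom.

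Second, I would invoke the Bouchet--Cunningham polyhedral characterisation of jump systems: for any finite jump system $J \subseteq \Z^m$,
\[
\conv(J) = \{x \in \R^m : \inangle{c,x} \leq r(c) \text{ for all } c \in \{-1,0,1\}^m\},
\]
with $r(c) := \max_{\alpha \in J}\inangle{c,\alpha} \in \Z$. The proof shows that a natural greedy procedure (increase coordinates where $c_i = +1$, decrease where $c_i = -1$, freeze the others, always staying inside $J$) terminates at a true maximiser of $\inangle{c,\cdot}$ over $\conv(J)$; both termination and optimality follow from the jump axiom, and this in turn certifies that every facet of $\conv(J)$ has a normal vector expressible in $\{-1,0,1\}^m$. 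Combining the two steps with $\cF = \supp(p)$ and $r(c) := \max_{\alpha \in \cF}\inangle{c,\alpha}$ yields the claimed description, with $r$ integer-valued because $\cF \subseteq \Z^m$.

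The main obstacle is the second step. The matroid-base analogue, with $\{0,1\}$-normals and a submodular rank function, is classical via Edmonds's argument, but in the jump-system setting the $-1$ entries are essential to capture coordinate-decreasing exchanges, and ruling out facets whose normals have a coordinate of absolute value at least $2$ cannot be done by any matroid-style shortcut; one needs the full two-step exchange axiom supplied by Br\"and\'en's theorem in step one.
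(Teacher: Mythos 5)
Your proposal follows exactly the same two-step route the paper uses: invoke Br\"and\'en's theorem that the support of a real stable polynomial with non-negative coefficients is a jump system, then invoke Bouchet--Cunningham's polyhedral characterisation of jump systems with normals in $\{-1,0,1\}^m$. The paper states these two citations and nothing more, so your proposal is a correct and somewhat more detailed rendering of the same argument.
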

\begin{proof}
The proof is a simple consequence of two results. It was proved in~\cite{Branden07} that the support of a real stable polynomial is a {\it jump system}. Such sets were studied previously in~\cite{BC95}, where a polyhedral characterization, as in the conclusion, was shown.
\end{proof}
\noindent
Another important, yet classical fact about real stable polynomial is that the are log-concave in the positive orthant.
\begin{fact}[\cite{Guler97}]\label{fact:rs_concave}
Let $p\in \R[x_1, \ldots, x_m]$ be a real stable polynomial with non-negative coefficients. Then the function $x\mapsto \log p(x)$ is concave over $x\in \R^{m}_{>0}$.
\end{fact}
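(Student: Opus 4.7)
The plan is to reduce the claim to G\aa rding's classical concavity theorem for hyperbolic polynomials, which is the main ingredient in G\"uler's self-concordance argument.

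The univariate heart of the matter is short. Fix $x \in \R^m_{>0}$ and a direction $v \in \R^m_{>0}$, and set $q(t) := p(x+tv)$. If $\Im(s) > 0$ then $\Im(x_i + s v_i) = v_i \Im(s) > 0$ for every $i$, so real stability of $p$ forces $q(s) \neq 0$; conjugation handles $\Im(s) < 0$, hence $q$ is real-rooted with positive value $q(0) = p(x) > 0$. A short calculation then gives $(\log q)''(t) = -\sum_i (t - r_i)^{-2} \leq 0$, where $r_1, \ldots, r_d$ are the roots of $q$, so $v^{\top}\nabla^2 \log p(x)\,v \leq 0$ for every $v \in \R^m_{>0}$.

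Upgrading this from $\R^m_{>0}$ to all of $\R^m$ is the non-trivial step --- a quadratic form non-positive on a cone need not be non-positive globally. I would homogenize: set $d = \deg p$ and
$$\tilde p(x_0, x_1, \ldots, x_m) \; := \; x_0^d \, p(x_1/x_0, \ldots, x_m/x_0),$$
which is homogeneous of degree $d$ with non-negative coefficients. By the same imaginary-part bookkeeping one verifies that $\tilde p$ is hyperbolic in the G\aa rding sense with respect to every direction $e \in \R^{m+1}_{>0}$: the M\"obius substitution $u = 1/(y_0 + t\,e_0)$ converts the univariate restriction $t \mapsto \tilde p(y + te)$ into a polynomial of the form $u \mapsto p(a + u b)$ with $b \in \R^m_{>0}$, which is real-rooted by the first paragraph. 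G\aa rding's theorem then asserts that the connected component of $e$ in $\R^{m+1}\setminus\{\tilde p = 0\}$ is an open convex cone on which $\tilde p^{1/d}$ is concave. Since $\tilde p > 0$ on the connected set $\R^{m+1}_{>0}$ (non-negative coefficients, positive degree), $\R^{m+1}_{>0}$ sits inside this cone; composing the concave positive function $\tilde p^{1/d}$ with the concave nondecreasing map $u \mapsto d\log u$ yields concavity of $\log \tilde p$ on $\R^{m+1}_{>0}$, and restriction to the affine slice $\{x_0 = 1\}$ gives concavity of $\log p$ on $\R^m_{>0}$.

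The main technical obstacle is the hyperbolicity verification for $\tilde p$: projective homogenization does not in general preserve real stability in the upper half-plane sense, so one cannot simply assert that $\tilde p$ is real stable and must instead check hyperbolicity by hand via the M\"obius substitution above, carefully using that the direction $e$ has strictly positive components. Beyond this, the deep ingredient is G\aa rding's theorem itself (convexity of the hyperbolicity cone and concavity of $\tilde p^{1/d}$ on it), which is classical and proved via Rolle-type interlacing arguments for real-rooted univariate restrictions; the remaining steps --- the univariate identity $(\log q)'' = -\sum (t - r_i)^{-2}$ and the affine restriction --- are bookkeeping.
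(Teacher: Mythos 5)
The paper does not prove this fact, it cites G\"uler (1997); there is thus no internal proof to compare against, only the correctness of your argument to assess. Your overall route (homogenize, invoke G{\aa}rding, restrict to the slice $x_0 = 1$) is the standard one, and you are right that the Hessian bound from the first paragraph, valid only for directions $v \in \R^m_{>0}$, is genuinely insufficient to conclude $\nabla^2 \log p \preceq 0$. However, the step that is supposed to do the heavy lifting --- the hand verification of hyperbolicity of $\tilde p$ --- contains a real error. Carrying out the M\"obius substitution $u = 1/(y_0 + te_0)$ on $t \mapsto \tilde p(y + te)$ gives, after clearing denominators, a polynomial proportional to $u \mapsto p(a + ub)$ with $a_j = e_j/e_0 > 0$ but $b_j = y_j - y_0 e_j/e_0$, and since $y \in \R^{m+1}$ is arbitrary the direction $b$ has \emph{no sign constraint at all}. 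Your first-paragraph argument only shows real-rootedness of $t \mapsto p(x + tv)$ when the \emph{direction} $v$ is positive (the base point being positive is actually unnecessary: real stability gives real-rootedness for arbitrary real base points and positive directions). It says nothing when $v$ has mixed signs, which is exactly the situation the M\"obius substitution creates.

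This is not a cosmetic slip: real-rootedness of $u \mapsto p(a + ub)$ with $a \in \R^m_{>0}$ and $b \in \R^m$ arbitrary is precisely equivalent to the nontrivial statement that $\tilde p$ is hyperbolic in every direction of the open positive orthant, which in turn amounts to the Borcea--Br\"and\'en result that homogenizing a real stable polynomial with nonnegative coefficients of degree $d$ yields a (homogeneous) real stable polynomial. Once that lemma is invoked, $\tilde p$ is homogeneous real stable and strictly positive on $\R^{m+1}_{>0}$, hence hyperbolic with respect to every $e \in \R^{m+1}_{>0}$ with hyperbolicity cone containing $\R^{m+1}_{>0}$, and the rest of your G{\aa}rding argument (concavity of $\tilde p^{1/d}$, composition with $\log$, restriction to the affine slice) goes through unchanged. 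So the plan is salvageable, but you cannot derive hyperbolicity ``by the first paragraph'' --- that is the gap, and it needs the homogenization-preserves-stability lemma (or an equivalent device) to close.
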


\subsection{Bounds for the matrix scaling problem} \label{ssec:matrix_scaling}

Consider the $(r,c)-$matrix scaling problem, where one is given a nonnegative square matrix $A\in \R^{n \times n}$ and two vectors $r,c\in \N^n$ (with $\norm{r}_1=\norm{c}_1$) and the goal is to find a scaling: two positive vectors $x,y\in \R^n_{>0}$ such that for $B$ defined as $B:=XAY$ (with $X=\diag{x}$ and $Y=\diag{y}$) it holds that
$ B\mathbbm{1}=r~~\mbox{and}~~B^\top \mathbbm{1} = c$
where $\mathbbm{1} \in \R^n$ is the all-one vector. In other words, we want the row-sums of the matrix $B$ to be equal to $r$ and column sums to be equal to $c$. 

For applications of matrix scaling, we refer to~\cite{ALOW17,CMTV17}, where recently fast algorithms for matrix scaling were recently derived.
Here let us only mention the problem of approximating the permanent of a nonnegative matrix. 
One can show that 
$\per(XAY) = \per(A) \cdot \prod_{i=1}^n x_i \cdot \prod_{i=1}^n y_i$
 and thus, by scaling (with $r=c=\mathbbm{1}$), one can reduce the problem of computing the permanent of a nonnegative matrix to the problem of computing the permanent of a doubly-stochastic matrix which is better understood.
In particular several useful bounds, such as the Van der Waerden's bound (see~\cite{Gurvits06}), and others (\cite{GS14}), are known for the permanent of doubly stochastic matrices.

One can prove that if such a scaling exists even {\it asymptotically} (i.e., a sequence of scalings exists such that they satisfy the scaling condition in the limit), then it can be recovered from the optimal solution to the following convex program 
\begin{equation}\label{eq:scaling}
\inf_{z\in \R^n} \sum_{i=1}^n r_i \log \inparen{\sum_{j=1}^n A_{i,j}e^{z_j}} - \inangle{c,z};
\end{equation}
\noindent 
see~\cite{Gurvits06,KLRS08,ALOW17,CMTV17}.
Indeed, the scaling is recovered as
$
x_i:=r_i \cdot \inparen{\sum_{l=1}^n A_{i,l} e^{z^\star_l} }^{-1}$ for $i=1,2, \ldots n$ and $y_j:=e^{z^\star_j}$ for $i=1,2, \ldots,n$,
where $z^\star$ stands for the optimal (or approximately optimal) solution to~\eqref{eq:scaling}. The question which arises naturally is: does the optimal (or approximately optimal) scaling $(x,y)$ have polynomial bit complexity? 
Or in other words: can we prove that the vector $z^\star$ has polynomially bounded entries: $\norm{z}_{\infty}=\poly(n, L_A)$? Here $L_A$ denotes the bit complexity of the matrix $A$, i.e., $L_A:= \max_{i,j\in [n]} |\log A_{i,j}|$. 

We interpret the optimization problem~\eqref{eq:scaling} as an instance of entropy maximization which will allow us to apply Theorem~\ref{thm:bound_informal} and to deduce appropriate bounds.
To this end we rewrite the problem~\eqref{eq:scaling} as
$
\inf_{z\in \R^n}  \log \prod_{i=1}^n \inparen{\sum_{j=1}^n A_{i,j}e^{z_j}}^{r_i} - \inangle{c,z}.
$
Hence,  there is a polynomial $p\in \R_{\geq 0}[x_1, x_2, \ldots, x_n]$ such that the above optimization problem is equivalent to
$$
\inf_{z\in \R^n}  \log \frac{p(e^{z_1}, e^{z_2}, \ldots, e^{z_n})}{e^{\inangle{z,c}}}=\inf_{z\in \R^n} \log \inparen{\sum_{\alpha} p_{\alpha} e^{\inangle{\alpha-c,z}}},
$$
where the summation is over the support of $p$, which in this case (if, say, all the entries $A_{i,j}$ are positive) is equal to 
$\cF = \inbraces{\alpha \in \N^n: \sum_{j=1}^n \alpha_j = \norm{r}_1}.$
Hence~\eqref{eq:scaling} is essentially a generalized max-entropy program over the set $\cF$ with expectation  $c\in \conv(\cF)$. 
This allows us to analyze the scaling problem by applying Theorem~\ref{thm:bound} (a generalization of Theorem~\ref{thm:bound_informal}) and prove polynomial bounds on the bit complexity.

Below we use $L_A$ for the bit complexity of the matrix $A$, i.e., $L_A:=\max\{\log |A_{i,j}|: i,j\in \{1,2,\ldots, n\}, ~A_{i,j}\neq 0\}$.

\begin{corollary}
Let $A\in \R^{n\times n}$ be a nonnegative matrix which is asymptotically  $(r,c)-$scalable with $\norm{r}_1=\norm{c}_1=h$. Then for every $\eps>0$ there exists a scaling $(x,y)$ such that:
\begin{align*}
\abs{\sum_{k=1}^n x_i y_k A_{i,k} - r_i}< \eps & ~~~~\mbox{for all } i\in [n],\\
\abs{\sum_{l=1}^n x_l y_j A_{l,j} - c_j}< \eps & ~~~~\mbox{for all } j\in [n],
\end{align*}
and the bit complexities of all entries of $x$ and $y$, i.e., $\max_{i\in [n]} |\log x_i|$ and $\max_{i\in [n]} |\log y_i|$ are bounded by $\poly\inparen{n, L_A, h, \log \frac{1}{\eps}}$. 
\end{corollary}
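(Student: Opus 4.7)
The plan is to recast the scaling convex program~\eqref{eq:scaling} as a generalized max-entropy dual~\eqref{eq:h} and then apply Corollary~\ref{cor:rep}. Define the polynomial $p(x_1,\ldots,x_n) := \prod_{i=1}^n \inparen{\sum_{j=1}^n A_{i,j}x_j}^{r_i}$ and expand it as $p(x) = \sum_{\alpha \in \cF} p_\alpha x^\alpha$, where $\cF := \supp(p) \subseteq \inbraces{\alpha \in \N^n : \sum_j \alpha_j = h}$. Substituting $x = e^z$ transforms the objective in~\eqref{eq:scaling} into $\log p(e^z) - \inangle{c,z} = \log \sum_{\alpha \in \cF} p_\alpha e^{\inangle{\alpha-c, z}} = h_c(z)$, which is exactly the generalized max-entropy dual at marginal $c$ with weights $p_\alpha$. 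Asymptotic $(r,c)$-scalability of $A$ is equivalent to $c \in \conv(\cF) =: P$, so the hypothesis of Corollary~\ref{cor:rep} is met.

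To invoke Corollary~\ref{cor:rep} I must bound the parameters $M = \fc(P)$, $d = \diam(\cF)$, and $L_p$. Each linear form $\sum_j A_{i,j}x_j$ has nonnegative coefficients and is immediately real stable (its imaginary part is strictly positive whenever all $\Im(x_j)>0$), and real stability is preserved under products, so $p$ is real stable. Fact~\ref{fact:rs} then gives $M = \fc(P) \leq 1$. Since $\cF$ sits inside the scaled simplex of side $h$, we have $d = O(h\sqrt{n})$, and multinomial expansion shows $L_p = \max_\alpha \abs{\log p_\alpha} = O(h \log n + h L_A) = \poly(n, h, L_A)$. Corollary~\ref{cor:rep}, applied with error $\eps$, then produces a vector $z \in \R^n$ of norm $R = \poly(n, L_A, h, \log(1/\eps))$ satisfying
$$\norm{\frac{\sum_{\alpha \in \cF} p_\alpha e^{\inangle{\alpha,z}}\alpha}{\sum_{\alpha \in \cF} p_\alpha e^{\inangle{\alpha,z}}} - c} < \eps.$$

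Now set $y_j := e^{z_j}$ and $x_i := r_i\bigl(\sum_l A_{i,l}e^{z_l}\bigr)^{-1}$. Differentiating $\log p(e^z) = \sum_i r_i \log(\sum_j A_{i,j}e^{z_j})$ coordinatewise identifies the $j$th entry of the vector inside the norm above with $\sum_i r_i \frac{A_{i,j}e^{z_j}}{\sum_k A_{i,k}e^{z_k}} = \sum_i x_i A_{i,j} y_j$, namely the $j$th column sum of $B := XAY$. Hence $\abs{\sum_l x_l A_{l,j} y_j - c_j} \leq \eps$ for every $j$. The row sums are satisfied \emph{exactly}: $\sum_k x_i A_{i,k} y_k = x_i \sum_k A_{i,k} e^{z_k} = r_i$ by the very definition of $x_i$. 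For bit complexity, $\abs{\log y_j} = \abs{z_j} \leq R$, while $\sum_l A_{i,l}e^{z_l}$ lies in $[e^{-L_A-R},\, n e^{L_A+R}]$ (using any nonzero $A_{i,l}$ for the lower bound), so $\abs{\log x_i} \leq \log r_i + L_A + R + \log n = \poly(n, L_A, h, \log(1/\eps))$.

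The main obstacle, and the reason one cannot simply invoke the~\cite{SinghV14} bound here, is that $c$ may lie arbitrarily close to (or on) the boundary of $P$, in which case the older estimate diverges; the uniform-over-$P$ guarantee of Theorem~\ref{thm:bound} sidesteps this entirely. The essential structural input enabling it is the real-stability of $p$, which collapses $\fc(P)$ to $O(1)$ regardless of the combinatorial structure of $\supp(p)$ and thereby keeps $R$ polynomial.
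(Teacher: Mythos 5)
Your proof is correct and follows essentially the same route as the paper's: recast~\eqref{eq:scaling} as the generalized max-entropy dual for $p(x)=\prod_i(\sum_j A_{i,j}x_j)^{r_i}$, use real stability and Fact~\ref{fact:rs} to get $\fc(P)=O(1)$, invoke Corollary~\ref{cor:rep}, and read off the scaling from the gradient. You add welcome explicit detail that the paper elides — the justification that asymptotic $(r,c)$-scalability amounts to $c\in\conv(\cF)$, and the arithmetic verifying $\abs{\log x_i}$ and $\abs{\log y_j}$ are polynomially bounded — but the argument is the same.
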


\begin{proof}
Our strategy is to derive the result from Corollary~\ref{cor:rep}. We first rewrite the objective~\eqref{eq:scaling} so that it matches the form as in Theorem~\ref{thm:bound}. 

\begin{align*}
\sum_{i=1}^n r_i \log \inparen{\sum_{j=1}^n A_{i,j}e^{z_j}} - \inangle{c,z}&= \log \inparen{\prod_{i=1}^n\inparen{\sum_{j=1}^n A_{i,j}e^{z_j}}^{r_i}} - \inangle{c,z}\\
&=\log \inparen{\sum_{\alpha \in \cF} p_\alpha e^{\inangle{\alpha - c,z}}}\\
& = h(c,z)
\end{align*}
For some set $\cF \subseteq \N^m$ and some family of positive numbers $\{p_\alpha\}_{\alpha \in \cF}$.

The next step is to obtain bounds on the various quantities $m,d,M,L_p$ which appear in Corollary~\eqref{cor:rep}. Clearly we choose $m:=n$. Next observe that $\norm{\alpha}_\infty\leq h$ for $\alpha \in \cF$, hence $d\leq h$ and that $\abs{\log p_\alpha}\leq \poly(h, L_A)$ for all $\alpha \in \cF$, hence $L_p =\poly(h, L_A)$. 

Let us now prove that the polytope $\conv(\cF)$ can be described by inequalities with small integer coefficients. To this end note that $p$ can be naturally treated as a polynomial ($p_\alpha$ is the coefficient of $x^\alpha:=\prod_{i\in [n]} x_i^{\alpha_i}$).
$$p(x) = \sum_{\alpha \in \cF} p_\alpha x^\alpha = \prod_{i=1}^n\inparen{\sum_{j=1}^n A_{i,j}x_j}^{r_i},$$
and the support of $p$ is equal to $\cF$. Since $p$ is a product of linear polynomials with nonnegative coefficients, it is a {\it real stable polynomial} (see preliminaries at the beginning of this section for some background). As shown in the preliminaries (Fact~\ref{fact:rs}), the Newton polytope of a real stable polynomial  can be described by inequalities of the form
$$\inangle{a,x} \leq b$$
where $a\in \{-1, 0, 1\}^n$. Hence we can take $M=O(1)$ in the statement of Corollary~\ref{cor:rep}. 

We now apply Corollary~\ref{cor:rep} to obtain a point $z^\star$ and use it to define a scaling by the following formulas
\begin{equation}\label{eq:scaling_xy}
x_i=r_i \cdot \inparen{\sum_{l=1}^n A_{i,l} e^{z^\star_l} }^{-1}~~~~ \mbox{ and } ~~~~ y_j=e^{z^\star_j},~~~~ \mbox{ for } i,j=1,2, \ldots,n.
\end{equation}
\noindent 
 Note that such a pair $(x,y)$ has bit complexity which is polynomial in $n,L_A, h, \log \frac{1}{\eps}$, hence it remains to reason about the precision of the resulting scaling.

By a direct calculation one obtains that
$$\abs{\sum_{k=1}^n x_i y_k A_{i,k} - r_i}=0 ~~~~\mbox{for all } i\in [n],$$
hence it remains to prove a bound on the precision of the scaling with respect to columns. For this, note that
$$\inparen{\nabla_z h(c,z)}_j = \sum_{l=1}^n x_l y_j A_{l,j} - c_j.$$
Since Corollary~\ref{cor:rep} implies that
$$\norm{\nabla_z h(c,z^\star)}_2<\eps$$
the bound on the scaling precision follows. 
\end{proof}

\subsection{Computability of recent continuous relaxations for counting and optimization Problems}\label{sec:app_cp}
The recent works~\cite{SV17} and~\cite{AO17} study counting and optimization problems involving polynomials and provide approximation algorithms for them. 

In the discussion below, for concreteness, we follow~\cite{SV17}.
 -- the setting of~\cite{AO17} is similar.

\cite{SV17} consider the following approach to construct approximation algorithms for counting and optimization problems: consider a multiaffine polynomial $p \in \R[x_1, x_2, \ldots, x_m]$ with nonnegative coefficients and a family of sets $\cB \subseteq 2^{[m]}$.\footnote{The polynomial in this problem is provided as an evaluation oracle. Similarly the family $\cB$ is given implicitly in the input, as a separation oracle for the convex hull of $\cB$.}
Let us denote by $p_\alpha$  the coefficient of the monomial $x^\alpha:= \prod_{i=1}^m x_i^{\alpha_i}$ in $p$.
Then the problems considered are to compute
\begin{equation}
\label{eq:cnt_max}
p_\cB := \sum_{\alpha \in \cB}p_\alpha ~~~~~~\mbox{and}~~~~~~~p^{\max}_\cB:=\max_{\alpha \in \cB} p_\alpha .
\end{equation}
\noindent
One particular application where one is required to solve such problems is when dealing with constrained Determinantal Point Processes~\cite{KuleszaTaskar12,CDKSV17,dpp_data_summarization}.
There, the polynomial $p(x)$ is of the form $\sum_{S\subseteq [m]} \det(L_{S,S})x^S$, where $L \in \R^{m \times m}$ is a PSD matrix and $L_{S,S}$ denotes the submatrix of $L$ corresponding to rows and columns in $S\subseteq [m]$. 
Solving such counting and optimization problems allows one to draw samples from constrained DPPs, i.e., distributions where the probability of a set $S$ is proportional to $\det(L_{S,S})$ when $S\in \cB$ and is $0$ when $S\notin \cB$. %
Such distributions  have various interesting uses in data summarization and fair and diverse sampling (see~\cite{CDKSV17}).

To tackle the counting problem of computing $p_\cB$ (as in~\eqref{eq:cnt_max}) the following relaxation is considered: 
\begin{equation}
\label{eq:capa}
\capa_\cB(p):= \sup_{\theta \in P(\cB)} \inf_{x>0} \frac{p(x)}{\prod_{i=1}^m x_i^{\theta_i}}.
\end{equation}
The relaxation for maximization is derived as a small modification of the above, hence we focus on $\capa_\cB(p)$  in this discussion.
  $\capa_\cB(p)$ approximates $p_\cB$ provably well under the assumption that (roughly)  the polynomial $p$ is {\it real stable} (see~\cite{Wagner11}) and that the family $\cB$ has a {\it matroid} structure. 
However, the question of efficient computability of these relaxations  was not established in~\cite{SV17}. 
The results of this paper allow us to deduce polynomial time algorithms for this relaxation in a fairly general setting.
We interpret this relaxation as a variant of a max-entropy program which then allows us to apply Theorem~\ref{thm:comp_entro} to reason about its computability. 

After taking the logarithm and replacing the variables $x_i>0$ by $e^{y_i}$ with $y_i\in \R$ for $i=1,2, \ldots, m$ in $\capa_\cB(p)$ we get

$$\log \capa_\cB(p)=\sup_{\theta \in P(\cB)} \inf_{y\in \R^m} \log \inparen{\sum_{\alpha} p_{\alpha} e^{\inangle{y,\alpha - \theta}}},$$
where the summation in the inner optimization problem runs over $\alpha \in \supp(p)$, i.e., the set of all monomials $\alpha\in \N^m$ whose coefficient $p_\alpha$ is non-zero. 
Note that the inner optimization problem is the dual of a generalized max-entropy program, as in~\eqref{eq:gen_max_entropy}.
Hence,  the inner optimization problem searches for a probability distribution over monomials of the polynomial $p$, whose expectation is $\theta$ and has the smallest possible $KL$-distance to the distribution given by the  coefficients of $p$.

By taking into account the outer optimization over $\theta$, one can see that $\log \capa_\cB(p)$ minimizes the $KL$-distance of a distribution over monomials of $p$ restricting its expectation to be in the convex hull of $\cB$.
This makes it suitable to apply Theorem~\ref{thm:comp_entro} and deduce polynomial time computability. 
Finally, we note that having a max-entropy solver for all $\theta$ is crucial here. 
If $\cF\subseteq \N^m$ denotes the support of $p$, then for any $\theta \notin \conv(\cF)$ the value of the entropy maximization problem is $-\infty$. 
Hence, in the outer optimization problem, the variable $\theta$ is  constrained to be in $P(\cB) \cap \conv(\cF)$. 
Thus, the optimal solution $\theta^\star$ of such an optimization problem might lie at the boundary of $\conv(\cF)$ or very close to it, hence the result of \cite{SinghV14} does not yield polynomial bit complexity bounds in this setting. We prove the following
\begin{theorem}\label{thm:compsv}
Let $p\in \R[x_1, x_2, \ldots, x_m]$ be a polynomial with nonnegative coefficients with support $\cF \subseteq \N^m$ and let $\cB \subseteq \N^m$. Assume that the unary facet complexity of $\conv(\cF)$ is $M$ and denote $d=\max(\diam(\cF), \diam(\cB))$. There is an algorithm which given an evaluation oracle for $p$, a separation oracle for $\conv(\cF)$, a separation oracle for $\conv(\cB)$ and an $\eps>0$ computes a number $X$ such that $1-\eps<\frac{X}{\capa_\cB(p)}<1+\eps$ in time $\poly(m, L_p, \log d, M, \log \frac{1}{\eps})$.
\end{theorem}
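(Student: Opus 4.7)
The plan is to recast $\log \capa_\cB(p)$ as a concave maximization problem and solve it by the ellipsoid method, using Theorem~\ref{thm:comp_entro} as an approximate first-order oracle. After the substitution $x_i = e^{y_i}$ and taking logarithms, one obtains
$$\log \capa_\cB(p) = \sup_{\theta \in K} g(\theta), \qquad K := P(\cB) \cap \conv(\cF),$$
where $g(\theta) = \inf_{y \in \R^m} h_\theta(y)$ and $h_\theta(y) = \log \sum_{\alpha \in \cF} p_\alpha e^{\inangle{y, \alpha - \theta}}$. The intersection with $\conv(\cF)$ is forced since $g(\theta) = -\infty$ outside $\conv(\cF)$. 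Because $h_\theta(y)$ is affine in $\theta$ for every fixed $y$, the function $g$ is an infimum of affine functions, hence concave. This is the structural fact that makes the outer problem amenable to convex optimization.

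I would then build the oracles required by the ellipsoid method. A separation oracle for $K$ is obtained by combining the two given separation oracles. For any query point $\theta \in K$ and precision $\delta > 0$, running Theorem~\ref{thm:comp_entro} produces in time $\poly(m, M, \log d, L_p, \log 1/\delta)$ a vector $y_\theta$ of polynomial bit complexity with $h_\theta(y_\theta) \leq g(\theta) + \delta$, yielding an approximate value oracle for $g$. Using the identity $h_{\theta'}(y_\theta) = h_\theta(y_\theta) + \inangle{-y_\theta, \theta' - \theta}$, which follows from the fact that $h_\theta(y)$ is affine in $\theta$, one deduces
$$g(\theta') \leq g(\theta) + \delta + \inangle{-y_\theta,\, \theta' - \theta} \qquad \text{for all } \theta' \in \R^m,$$
so that $-y_\theta$ is a $\delta$-approximate supergradient of $g$ at $\theta$. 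Crucially, by Theorem~\ref{thm:bound} the norm $\norm{y_\theta}$ is $\poly(m, M, \log d, L_p, \log 1/\delta)$ uniformly in $\theta \in K$.

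With these oracles, I would run the standard ellipsoid method for concave maximization under a separation oracle, as in~\cite{BentalN12}. The enclosing ball of radius $O(d)$ comes from the diameter bound; the range of $g$ on $K$ is polynomially bounded using the bounds on short approximate dual solutions; and the separation and first-order oracles above run in polynomial time. A minor technicality is that $K$ may fail to be full-dimensional, which is handled by the standard restriction of the ellipsoid method to the affine hull of $K$ (itself identifiable via the separation oracle). Picking $\delta$ as a suitable inverse polynomial in $\eps$ and the other parameters produces an additive $\eps/2$-approximation to $\log \capa_\cB(p)$, which exponentiates to the claimed multiplicative $(1 \pm \eps)$-approximation of $\capa_\cB(p)$.

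The main obstacle, exactly as stressed in the paragraphs preceding the theorem statement, is that the outer optimizer $\theta^\star$ may sit on or exponentially close to the boundary of $\conv(\cF)$, where $g$ is non-Lipschitz and the earlier bound of~\cite{SinghV14} degrades to infinity, so that in principle the supergradients $-y_\theta$ could blow up as the ellipsoid approaches the optimum and ruin the iteration complexity. Theorem~\ref{thm:bound} is precisely designed to circumvent this obstacle: it guarantees an $\eps$-approximate dual solution of polynomially bounded norm \emph{uniformly} over $\conv(\cF)$, keeping both the supergradient magnitude and the cost of each oracle call under polynomial control throughout the run of the ellipsoid method.
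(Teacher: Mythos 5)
Your proposal is correct and follows essentially the same route as the paper: recast $\log\capa_\cB(p)$ as $\sup_{\theta \in P(\cB)\cap\conv(\cF)} g(\theta)$, observe concavity of $g$ as an infimum of affine functions, use Theorem~\ref{thm:comp_entro} as an approximate value/supergradient oracle with the uniform norm bound from Theorem~\ref{thm:bound}, and run the ellipsoid method with the standard affine-hull restriction in the degenerate case. The only cosmetic difference is that the paper explicitly optimizes the Lipschitz proxy $\wt g(\theta)=\inf_{y\in B(0,R)}h_\theta(y)$ rather than $g$ directly, which is the same mechanism you invoke implicitly by controlling $\norm{y_\theta}$; your sign bookkeeping for the supergradient ($-y_\theta$) is in fact the cleaner version of the paper's display.
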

\noindent 

\begin{proof}
Denote by $P \subseteq \R^m$ the convex hull of $\cF$ -- the support of $p$.
As discussed above, $\capa_\cB(p)$ can we rewritten equivalently as
$$\log \capa_\cB(p) = \sup_{\theta \in P(\cB)} \inf_{y\in \R^m} \log \inparen{\sum_{\alpha \in \cF} p_{\alpha} e^{\inangle{\alpha-\theta, y}}}=\sup_{\theta \in P(\cB)} \inf_{y\in \R^m}h(\theta,y),$$
where $\cF \subseteq \N^m$ denotes the support of $p$. Thus we obtain a form of $\log \capa_\cB(p)$ in terms of a function $h$ as in~\eqref{eq:h}. Let us denote
$$g(\theta) = \inf_{y\in \R^m} \log \inparen{\sum_{\alpha \in \cF} p_{\alpha} e^{\inangle{\alpha-\theta, y}}}.$$
Hence the goal is to solve $\sup_{\theta \in P(\cB)} g(\theta)$. In fact, since $g(\theta)=-\infty$ whenever $\theta \notin P$, we can rewrite it equivalently as
$$\sup_{\theta \in P(\cB) \cap P} g(\theta).$$
Importantly, since we are only interested in an additive $\eps$-approximation of the above quantity we can apply Theorem~\ref{thm:bound} to replace $g$ by the following Lipscthiz proxy:
 $$\wt{g}(\theta):=\inf_{y\in B(0,R)}h(\theta,y)$$ for an appropriate number $R$ -- polynomial in the input size and $\log \frac{1}{\eps}$ (as in Theorem~\ref{thm:bound}). 

We now apply the ellipsoid method to find an additive $\eps$-approximation to 
\begin{equation}\label{eq:simp_wtg}
\sup_{\theta \in P(\cB) \cap P} \wt{g}(\theta).
\end{equation}
Firstly, observe that the function $\wt{g}$ is concave -- as a pointwise infimum of affine functions.
Now, following the treatment of the ellipsoid method in~\cite{BentalN12} (Theorem 8.2.1) we have to address the following requirements to obtain polynomial running time
\begin{enumerate}
\item Construct a first order oracle for $\theta \mapsto \wt{g}(\theta)$, i.e., an efficient way to evaluate values $\wt{g}(\theta)$ and (sub)gradients of this function for $\theta \in P(\cB) \cap P$.
\item Provide an outer ball -- containing the domain of the considered optimization problem and an inner ball -- contained in the domain. The radii of them should be of polynomial bit complexity.
\item A bound on the gap between the maximum and minimum value of $\wt{g}(\theta)$ in $P(\cB) \cap P$. More precisely, defining $D:=\max_{\theta \in P(\cB) \cap P} \wt{g}(\theta) -\min_{\theta \in P(\cB) \cap P} \wt{g}(\theta)$ we would like $\log D$ to be polynomially bounded.

\item Provide a separation oracle for the domain $P(\cB) \cap P$.
\end{enumerate}

\parag{Point (1).}
To obtain an evaluation oracle for $\wt{g}$ note that Theorem~\ref{thm:comp_entro} gives an algorithm to compute $\wt{g}$ up to $\delta$ additive error in time polynomial in $\log \frac{1}{\delta}$. 
This provides an approximate (weak) evaluation oracle, which is still enough to run the ellipsoid method (using shallow cuts, see~\cite{Grotschel1988}).

Let us now discuss the oracle for the gradient of $\wt{g}$.
It is a standard fact in convex programming that for any $\theta \in P$ it holds that the point
$$y^\star:=\argmin_{y\in B(0,R)}  h(\theta,y)$$
is a subgradient of $\wt{g}$ at $\theta$. 
Using Theorem~\ref{thm:comp_entro} we can efficiently compute a vector $y$ which is a $\delta$-approximation to $y^\star$ in the sense that $h(\theta,y)-h(\theta,y^\star)\leq \delta$.
We use this algorithm to implement an approximate subgradient oracle for $\wt{g}$, i.e., for a given $\theta$ we output the vector $y$ as above, using the algorithm in Theorem~\ref{thm:comp_entro}. 

It remains to justify why is such an approximate gradient enough to run the ellipsoid method. 
Note that if $\theta \in P$ and  $y\in B(0,R)$ is such that $h(\theta,y)\leq \wt{g}(\theta) +\delta$ then for any $\theta'\in P$ we have

$$g(\theta') \leq h(\theta',y) = h(\theta,y) + \inangle{\theta'-\theta,y}\leq g(\theta)+\delta + \inangle{\theta'-\theta,y},$$
where the equality above holds because $h$ is an affine function of $\theta$. Hence, by applying such a $y$ as a gradient oracle, we obtain a separation hyperplane which never cuts out a point $\theta'\in P$ of value $g(\theta')>g(\theta)+\delta$.
This, along with the fact that $y$ can be found in time polynomial in $\log \frac{1}{\delta}$, allows us to apply the shallow cut ellipsoid method.
We also refer the reader to the full version of~\cite{SinghV14} and the proof of Theorem 2.11 therein where a detailed discussion of an ellipsoid algorithm for a related problem is provided.

\parag{Point (2).}
An outer ball is easy to obtain since $P(\cB) \subseteq [0,1]^m$.
For the inner ball, suppose first that $P(\cB) \cap P$ is full-dimensional. 
Then, since the vertices of both these polytopes have small integer entries, one can show using standard techniques that a ball of radius $\Omega(e^{-\poly(m)})$ can be fit inside this polytope.
In the non-full-dimensional case one has to work in a linear subspace $H\subseteq \R^m$ on which $P(\cB) \cap P$ is full-dimensional; $H$ can be found given separation oracles of $P(\cB)$ and $P$ using standard subspace identification techniques (see e.g.~\cite{Grotschel1988}).

\parag{Point (3).}
As in the proof of Theorem~\ref{thm:comp_entro} we observe that for every $y\in B(0,R)$
$$-L_p \cdot R\cdot m \cdot d \log d\leq \log  \inparen{\sum_{\alpha \in \cF} p_{\alpha} e^{\inangle{\alpha-\theta, y}}}\leq L_p \cdot R\cdot m \cdot d \log d, $$
\noindent 
and hence the gap (and hence clearly its logarithm as well) is polynomially bounded.

\parag{Point (4).}
This is clear as we are given separation oracles for both $P(\cB)$ and $P$.

\end{proof}

\subsection{Proof of computability of worst-case Brascamp-Lieb constants}\label{sec:app_bl}

\noindent 
\begin{proofof}{of Theorem~\ref{thm:bl_comp}}
Let us denote by $v_j \in \R^n$ the only row of the matrix $B_j$ and by $V\in \R^{m \times n}$ the matrix collecting all the $v_j$'s as rows. Then the the Brascamp-Lieb constant (for the rank-$1$ case) can be computed as

\begin{equation}
\bl(B,p)=\inf_{x>0} \frac{\det\inparen{\sum_{j=1}^m p_j x_j v_j v_j^\top}}{\prod_{j=1}^m x_j^{p_j}}.
\end{equation}
Note that the numerator is simply a polynomial $r(x)=\sum_{S\subseteq [m], |S|=n} p^S x^S \det(V_S V_S^\top)$, where $V_S$ is the submatrix of $V$ corresponding to the index set $S\subseteq [m]$ and $x^S:=\prod_{i\in S} x_i$.
Therefore, computability of $\bl(B,p)$ directly follows from our general result (Theorem~\ref{thm:comp_entro}) on computability of maximum entropy distributions.

The problem of computing the (logarithm of the) worst-case constant can be reformulated as

$$\sup_{p\in P_B} \inf_{y\in \R^m} \log \inparen{ \sum_{\alpha \in \cF}r_\alpha(p) e^{\inangle{\alpha - p, y}}},$$
\noindent 
where $\cF \subseteq \{0,1\}^m$ is the support of $r$ and $r_\alpha(p)$ is the coefficient of $\alpha$ in r, as a function of $p\in P_B$. 
Thus, the inner optimization problem is the dual of a certain max-entropy program. 
Moreover, the function
$$p\mapsto  \log \inparen{ \sum_{\alpha \in \cF}r_\alpha(p) e^{\inangle{\alpha - p, y}}}=\log \inparen{\sum_{S\subseteq [m], |S|=n} p^S x^S \det(V_S V_S^\top)}$$ is concave over $\R^{m}_{>0}$, as the polynomial $p \mapsto \sum_{S\subseteq [m], |S|=n} p^S x^S \det(V_S V_S^\top)$ (treating $x>0$ as a vector of positive constants) is real stable, see Fact~\ref{fact:rs_concave}.
From this point on, the argument follows along the same lines as the proof of Theorem~\ref{thm:compsv}.

\end{proofof}

\bibliographystyle{plain}
\bibliography{references} 

\begin{thebibliography}{10}

\bibitem{ALOW17}
Zeyuan {Allen Zhu}, Yuanzhi Li, Rafael Oliveira, and Avi Wigderson.
\newblock Much faster algorithms for matrix scaling.
\newblock In {\em FOCS'17: Proceedings of the 58th Annual IEEE Symposium on
  Foundations of Computer Science}, 2017.

\bibitem{AV97}
Noga Alon and Van~H. Vu.
\newblock Anti-{H}adamard matrices, coin weighing, threshold gates, and
  indecomposable hypergraphs.
\newblock {\em J. Comb. Theory, Ser. {A}}, 79(1):133--160, 1997.

\bibitem{AO17}
Nima Anari and Shayan {Oveis Gharan}.
\newblock A generalization of permanent inequalities and applications in
  counting and optimization.
\newblock In {\em Proceedings of the 49th Annual ACM SIGACT Symposium on Theory
  of Computing}, STOC 2017, pages 384--396, 2017.

\bibitem{AGMGS10}
Arash Asadpour, Michel~X. Goemans, Aleksander Madry, Shayan~Oveis Gharan, and
  Amin Saberi.
\newblock An {O}(log n/ log log n)-approximation algorithm for the asymmetric
  traveling salesman problem.
\newblock In {\em Proceedings of the Twenty-first Annual ACM-SIAM Symposium on
  Discrete Algorithms}, SODA '10, pages 379--389, 2010.

\bibitem{Barthe98}
Franck Barthe.
\newblock On a reverse form of the brascamp-lieb inequality.
\newblock {\em Inventiones mathematicae}, 134(2):335--361, Oct 1998.

\bibitem{BentalN12}
Aharon Ben-Tal and Arkadi Nemirovski.
\newblock Optimization {III}: Convex analysis, nonlinear programming theory,
  nonlinear programming algorithms.
\newblock Lecture Notes, 2012.

\bibitem{BC95}
Andr{\'{e}} Bouchet and William~H. Cunningham.
\newblock Delta-matroids, jump systems, and bisubmodular polyhedra.
\newblock {\em {SIAM} J. Discrete Math.}, 8(1):17--32, 1995.

\bibitem{bousquet2002stability}
Olivier Bousquet and Andr{\'e} Elisseeff.
\newblock Stability and generalization.
\newblock {\em Journal of machine learning research}, 2(Mar):499--526, 2002.

\bibitem{Branden07}
Petter Branden.
\newblock Polynomials with the half-plane property and matroid theory.
\newblock {\em Advances in Mathematics}, 216(1):302 -- 320, 2007.

\bibitem{Brascamp02}
Herm~Jan Brascamp and Elliott~H. Lieb.
\newblock {\em Best constants in Young's inequality, its converse, and its
  generalization to more than three functions}, pages 417--439.
\newblock Springer Berlin Heidelberg, Berlin, Heidelberg, 2002.

\bibitem{CDKSV17}
L.~Elisa Celis, Amit Deshpande, Tarun Kathuria, Damian Straszak, and
  Nisheeth~K. Vishnoi.
\newblock On the complexity of constrained determinantal point processes.
\newblock In {\em Approximation, Randomization, and Combinatorial Optimization.
  Algorithms and Techniques, {APPROX/RANDOM} 2017, August 16-18, 2017,
  Berkeley, CA, {USA}}, pages 36:1--36:22, 2017.

\bibitem{dpp_data_summarization}
L.~Elisa Celis, Vijay Keswani, Damian Straszak, Amit Deshpande, Tarun Kathuria,
  and Nisheeth~K. Vishnoi.
\newblock {Fair and Diverse DPP-Based Data Summarization}.
\newblock In {\em {ICML}}, volume~80 of {\em Proceedings of Machine Learning
  Research}, pages 715--724. {PMLR}, 2018.

\bibitem{CMTV17}
Michael~B. Cohen, Aleksander Madry, Dimitris Tsipras, and Adrian Vladu.
\newblock Matrix scaling and balancing via box constrained {N}ewton's method
  and interior point methods.
\newblock In {\em FOCS'17: Proceedings of the 58th Annual IEEE Symposium on
  Foundations of Computer Science}, 2017.

\bibitem{Dudik07}
Miroslav Dud{\'\i}k, Steven~J Phillips, and Robert~E Schapire.
\newblock Maximum entropy density estimation with generalized regularization
  and an application to species distribution modeling.
\newblock {\em Journal of Machine Learning Research}, 8(Jun):1217--1260, 2007.

\bibitem{GGOW17}
Ankit Garg, Leonid Gurvits, Rafael Oliveira, and Avi Wigderson.
\newblock {Algorithmic and optimization aspects of {B}rascamp-{L}ieb
  inequalities, via operator scaling}.
\newblock In {\em Proceedings of the 49th Annual {ACM} Symposium on Theory of
  Computing, Montreal, QC, Canada, June 19-23, 2017}, pages 397--409, 2017.

\bibitem{Grotschel1988}
Martin Gr{\"o}tschel, L{\'a}szlo Lov{\'a}sz, and Alexander Schrijver.
\newblock {\em {Geometric algorithms and combinatorial optimization}}, volume~2
  of {\em Algorithms and Combinatorics}.
\newblock Springer, 1988.

\bibitem{Guler97}
Osman G{\"{u}}ler.
\newblock Hyperbolic polynomials and interior point methods for convex
  programming.
\newblock {\em Math. Oper. Res.}, 22(2):350--377, 1997.

\bibitem{Gurvits06}
Leonid Gurvits.
\newblock {Hyperbolic polynomials approach to Van der Waerden/Schrijver-Valiant
  like conjectures: sharper bounds, simpler proofs and algorithmic
  applications}.
\newblock In {\em Proceedings of the thirty-eighth annual ACM symposium on
  Theory of computing}, pages 417--426. ACM, 2006.

\bibitem{GS02}
Leonid Gurvits and Alex Samorodnitsky.
\newblock A deterministic polynomial-time algorithm for approximating mixed
  discriminant and mixed volume, and a combinatorial corollary.
\newblock {\em {Discrete \& Computational Geometry}}, 27:531--550, 2002.

\bibitem{GS14}
Leonid Gurvits and Alex Samorodnitsky.
\newblock Bounds on the permanent and some applications.
\newblock In {\em 55th {IEEE} Annual Symposium on Foundations of Computer
  Science, {FOCS} 2014, Philadelphia, PA, USA, October 18-21, 2014}, pages
  90--99, 2014.

\bibitem{HuangV19}
Lingxiao Huang and Nisheeth~K. Vishnoi.
\newblock {Stable and Fair Classification}.
\newblock In {\em {ICML}}, 2019.

\bibitem{Jaynes1}
Edwin~T. {Jaynes}.
\newblock {Information theory and statistical mechanics}.
\newblock {\em Physical Review}, 106:620--630, May 1957.

\bibitem{Jaynes2}
Edwin~T. {Jaynes}.
\newblock {Information theory and statistical mechanics. II}.
\newblock {\em Physical Review}, 108:171--190, October 1957.

\bibitem{KLRS08}
Bahman Kalantari, Isabella Lari, Federica Ricca, and Bruno Simeone.
\newblock On the complexity of general matrix scaling and entropy minimization
  via the {RAS} algorithm.
\newblock {\em Math. Program.}, 112(2):371--401, 2008.

\bibitem{KuleszaTaskar12}
Alex Kulesza and Ben Taskar.
\newblock {\em Determinantal point processes for machine learning}.
\newblock Now Publishers Inc., Hanover, MA, USA, 2012.

\bibitem{Lieb90}
Elliott~H. Lieb.
\newblock Gaussian kernels have only gaussian maximizers.
\newblock {\em Inventiones mathematicae}, 102(1):179--208, Dec 1990.

\bibitem{Nesterov14}
Yurii Nesterov.
\newblock {\em Introductory lectures on convex optimization: A basic course}.
\newblock Springer Publishing Company, Incorporated, 1 edition, 2014.

\bibitem{Nigam99}
Kamal Nigam.
\newblock Using maximum entropy for text classification.
\newblock In {\em In IJCAI-99 Workshop on Machine Learning for Information
  Filtering}, pages 61--67, 1999.

\bibitem{NS16}
Aleksandar Nikolov and Mohit Singh.
\newblock Maximizing determinants under partition constraints.
\newblock In {\em Proceedings of the 48th Annual ACM SIGACT Symposium on Theory
  of Computing}, pages 192--201, 2016.

\bibitem{PietraPL97}
Stephen~Della Pietra, Vincent J.~Della Pietra, and John~D. Lafferty.
\newblock Inducing features of random fields.
\newblock {\em {IEEE} Trans. Pattern Anal. Mach. Intell.}, 19(4):380--393,
  1997.

\bibitem{SW87}
M.~C. Shewry and H.~P. Wynn.
\newblock Maximum entropy sampling.
\newblock {\em Journal of Applied Statistics}, 14(2):165--170, 1987.

\bibitem{SinghV14}
Mohit Singh and Nisheeth~K Vishnoi.
\newblock Entropy, optimization and counting.
\newblock In {\em Proceedings of the 46th Annual ACM Symposium on Theory of
  Computing}, pages 50--59. ACM, 2014.

\bibitem{Soofi00}
Ehsan~S Soofi.
\newblock Principal information theoretic approaches.
\newblock {\em Journal of the American Statistical Association},
  95(452):1349--1353, 2000.

\bibitem{SV17}
Damian Straszak and Nisheeth~K. Vishnoi.
\newblock Real stable polynomials and matroids: Optimization and counting.
\newblock In {\em Proceedings of the 49th Annual ACM SIGACT Symposium on Theory
  of Computing}, STOC 2017, pages 370--383, 2017.

\bibitem{Wagner11}
David Wagner.
\newblock Multivariate stable polynomials: theory and applications.
\newblock {\em Bulletin of the American Mathematical Society}, 48(1):53--84,
  2011.

\bibitem{yu2013}
Bin Yu.
\newblock Stability.
\newblock {\em Bernoulli}, 19(4):1484--1500, 09 2013.

\end{thebibliography}

\appendix

\section{Example when the empirical mean lies on the boundary}\label{eq:sec_boundary}
In this section, we show that for certain well-behaved classes of polytopes the empirical mean of a sequence of uniform samples will yield a $\theta$ that is very likely to lie on the boundary, and hence the result of~\cite{SinghV14} is not applicable in such cases.
\begin{theorem}[Empirical mean arbitrarily close to the boundary]
There exists a family of vectors $\cF \subseteq \{0,1\}^m$ such that $P:=\conv(\cF)$ contains a ball of radius $\Omega\inparen{1}$, has unary facet complexity $1$, and if $X_1, X_2, X_3, \ldots \in \cF$ are independent uniform samples from $\cF$, then for every $N\leq 2^m$ it holds
$$\prob{\frac{1}{N}\sum_{i=1}^N X_i \notin \partial P}= O\inparen{\frac{N}{2^m}},$$
where $\partial P$ denotes the boundary of the polytope $P$.
\end{theorem}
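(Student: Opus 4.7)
The plan is to take $\cF$ to be the union of the back face of the unit cube and a single opposite vertex. Specifically, set
$$\cF := \inparen{\{0\} \times \{0,1\}^{m-1}} \cup \{\mathbf{1}\},$$
where $\mathbf{1} = (1,1,\ldots,1)$, so that $|\cF| = 2^{m-1} + 1$. I would first show that the resulting pyramid satisfies
$$P := \conv(\cF) = \{x \in \R^m : 0 \leq x_1 \leq x_i \leq 1 \text{ for } i = 2, \ldots, m\},$$
i.e.\ $P$ is the subset of the cube where $x_1$ is the coordinate-wise minimum. One inclusion is immediate from $\cF \subseteq \{x_1 \leq x_i\}$; for the reverse, given $x$ in the right-hand side I would decompose $x = x_1 \mathbf{1} + (1 - x_1) y$ with $y = (0, (x_i - x_1)/(1 - x_1))_{i \geq 2} \in \{0\} \times [0,1]^{m-1}$, and then further write $y$ itself as a convex combination of vertices in $\{0\} \times \{0,1\}^{m-1}$.

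From this description the unary facet complexity is $1$, since the defining inequalities $x_1 \geq 0$, $x_i \leq 1$, and $x_i - x_1 \geq 0$ all have integer coefficients in $\{-1,0,1\}$. To check that $P$ contains a ball of radius $\Omega(1)$, I would look for a ball center of the symmetric form $c = (c_1, b, \ldots, b)$ equidistant from the three families of facets; solving $c_1 = 1 - b = (b - c_1)/\sqrt{2}$ yields $c_1 = 1/(2 + \sqrt{2})$, a positive constant independent of $m$, so the inscribed ball has radius $\Theta(1)$.

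For the probability bound, I would apply the union bound to the event that some sample equals the apex $\mathbf{1}$: since $\prob{X_j = \mathbf{1}} = 1/(2^{m-1}+1)$,
$$\prob{\exists j \in [N]\colon X_j = \mathbf{1}} \leq \frac{N}{2^{m-1}+1} = O\inparen{\frac{N}{2^m}}.$$
On the complementary event every $X_j$ has first coordinate $0$, so the empirical mean $\frac{1}{N} \sum_j X_j$ also has first coordinate $0$ and hence lies on the facet $\{x_1 = 0\}$ of $P$, which is contained in $\partial P$. The main (mild) obstacle is the facet description step: one has to notice that $\conv(\cF)$ is precisely the cube region where $x_1$ lower-bounds all other coordinates, a combinatorial observation without which the polytope would appear to have complicated facets; everything else reduces to routine union bounding and direct algebra.
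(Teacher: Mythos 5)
Your proposal is correct and takes essentially the same approach as the paper: both build a pyramid over the $(m-1)$-dimensional cube face $\{0\}\times\{0,1\}^{m-1}$ by adjoining one extra vertex, so that the base facet already contains a fraction $1 - 1/(2^{m-1}+1)$ of $\cF$, and then bound the probability that some sample misses the base facet. The only differences are cosmetic: you take $\mathbf{1}$ as the apex while the paper takes $e_1 = (1,0,\ldots,0)$, which changes the facet description (your $\{0 \le x_1 \le x_i \le 1\}$ versus the paper's $\{0 \le x_1 \le 1,\ 0 \le x_i \le 1 - x_1\}$) and the inscribed-ball verification (your symmetric-center computation versus the paper's observation that $[0,1/2]^m \subseteq P$), but not the substance; and you apply the union bound directly to the ``apex hit'' event, while the paper bounds $1 - (1 - p)^N \le Np$, which is the same estimate in disguise.
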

\begin{proof}
Let $\cF$  be
$$\cF :=\inparen{\{0\}\times \{0,1\}^{m-1}} \cup \{(1,0,0,\ldots, 0)\}.$$
The first step is now to observe that the polytope $P:=\conv(\cF)$ can be simply described by:
$$P:=\{x\in \R^m: 0\leq x_1\leq 1,~~0\leq x_i \leq 1-x_1, ~~\mbox{for $i=1,2,\ldots, m$}\}.$$
From this description, it already follows that $P$ has unary facet complexity of $1$, because all coefficients used to describe this polytope have magnitude $1$. Moreover, another simple consequence of the above description of $P$ is that
$$[0,1/2]^m \subseteq P.$$
In particular, it follows that $P$ contains a Euclidean ball of radius $\Omega(1)$ (for instance the ball centered at $(1/4,1/4, \ldots, 1/4)$ with radius $1/4$).

It remains to establish a bound on the probability that the empirical mean of $N\leq 2^m$ independent samples lies on the boundary of $P$. For this, observe first that 
$$F=\{0\}\times [0,1]^{m-1}$$
is an $(m-1)$-dimensional facet of $P$ and in particular, is a subset of the boundary. Moreover
$$\prob{X_i \in F} = \frac{1}{|\cF|} = \frac{1}{2^{m-1}+1}.$$
Consequently, we have
$$\prob{\sum_{i=1}^N X_i \in \partial P} \geq \prob{\frac{1}{m}\sum_{i=1}^N X_i \in F} =\prod_{i=1}^N \prob{X_i\in F} = \inparen{1-\frac{1}{2^{m-1}+1}}^N .$$
Thus, 
$$\prob{\sum_{i=1}^N X_i \notin \partial P}\leq 1 -  \inparen{1-\frac{1}{2^{m-1}+1}}^N= O\inparen{\frac{N}{2^m}},$$
the latter follows because $\frac{N}{2^m}=O(1)$.
\end{proof}

\end{document}